 \definecolor{Gray}{gray}{0.8}
\begin{document}


\title{ Modelling cooperating failure-resilient Processes}
\author{R\"udiger Valk}
\institute{University of Hamburg, Department of Informatics\\
Hamburg, Germany \\
  \email{ruediger.valk@uni-hamburg.de}}
\maketitle

\begin{abstract}
Cycloids are particular Petri nets for modelling processes of actions or events. They belong to the fundaments of Petri's general systems theory and have very different interpretations, ranging from Einstein's relativity theory and  elementary information processing gates to the modelling of interacting sequential processes. 
The subclass of regular cycloids describes cooperating sequential processes. 
Such cycloids are extended to cover failure resilience.\end{abstract}

\begin{keywords}
circular traffic queues,
foldings,
structure of Petri nets, 
cycloids, 
failure-resilient systems

\end{keywords}


\section{Introduction}\label{sec-intro}
Cycloids have been introduced  by C.A. Petri in \cite{Petri-NTS} in the section on physical spaces, using as examples firemen carrying the buckets with water to extinguish a fire, the shift from Galilei to Lorentz transformation and the representation of elementary logical gates like Quine-transfers.
Based on  formal descriptions of cycloids in \cite{Kummer-Stehr-1997} and \cite{fenske-da} a more elaborate formalization is given in \cite{Valk-2019}.
\begin{figure}[htbp]
 \begin{center}
        \includegraphics [scale = 0.31]{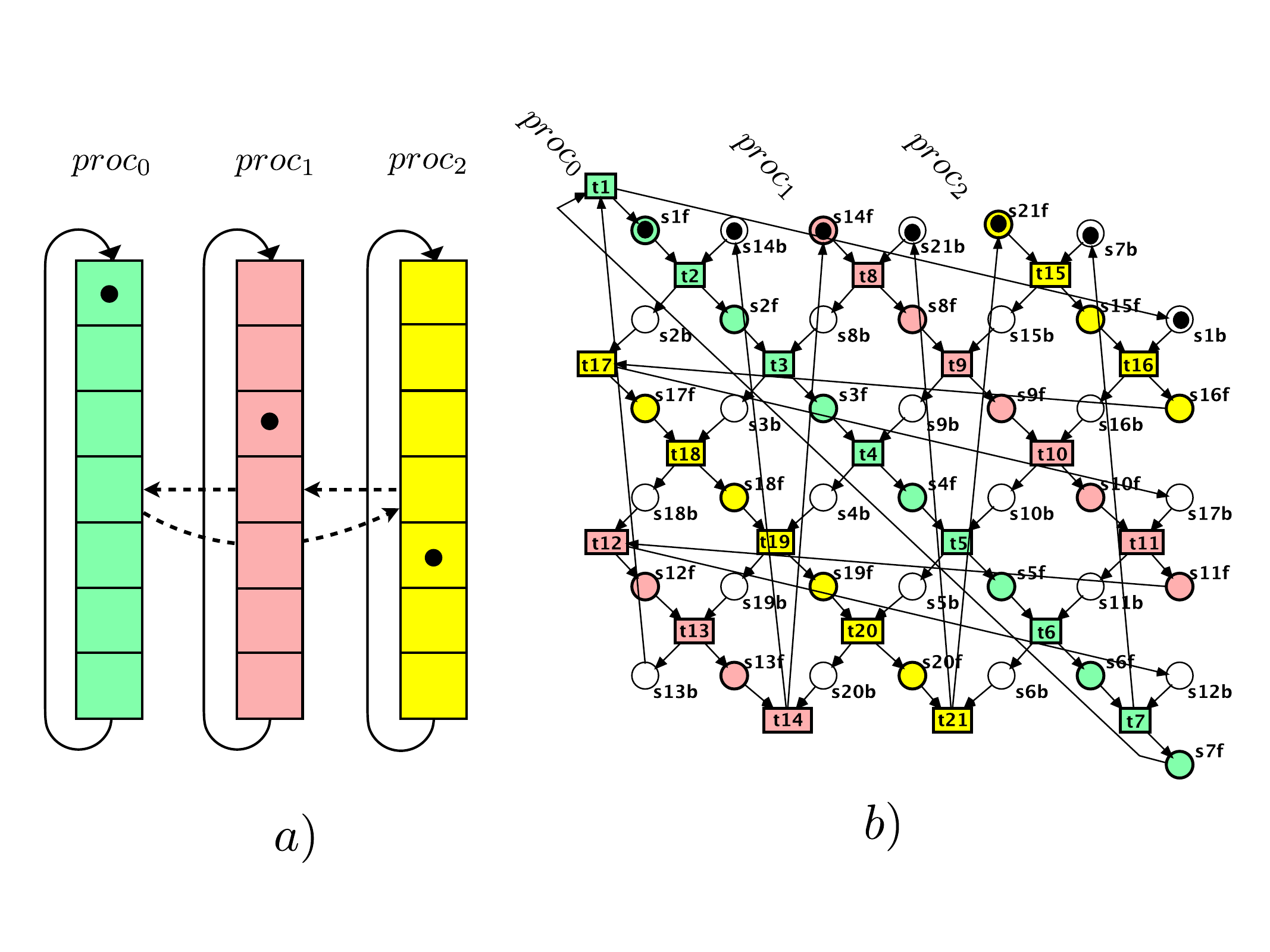}
        \caption{Three sequential processes synchronized by single-bit channels, }
        \label{c-4-3-3-3-b}
      \end{center}
\end{figure}

To give an application for the theory, as presented in this article,  consider a distributed system of a finite number of circular and sequential processes. The processes are synchronized by uni-directional one-bit channels in such a way that they behave like a circular traffic queue when folded together. To give an example, Figure \ref{c-4-3-3-3-b}a) shows three such sequential circular processes, each of length $7$. In the initial state the control is in position $1$, $3$ and $5$, respectively. The synchronization, realized by the connecting channels, should be such as the three processes would be folded together. This means, that the controls of $proc_0$ and $proc_1$ can make only one  step until the next process $proc_2$ makes a step itself, while
the control of $proc_2$ can make  two steps until $proc_0$ makes a step. Following  \cite{Valk-2020} this behaviour is realized by the cycloid of 
Figure \ref{c-4-3-3-3-b}b) modelling the three processes 
by the transition sequences $proc_0 =$ [\textbf{t1 t2 $\cdots$ t7}], as well as $proc_1 =$ [\textbf{t8 t9 $\cdots$ t14}] and $proc_2 =$ [ \textbf{t15 t16 $\cdots$ t21}]. The channels are represented by the safe places connecting these processes. By this example the power of the presented theory is shown, since the rather complex net is  unambiguously determined by the parameters 
 $\mathcal{C} ( \alpha, \beta, \gamma, \delta ) = \mathcal{C}(4,3,3,3)$.

We recall some standard notations for set theoretical relations.
If $R\subseteq A \!\times\! B$ is a relation and $U \subseteq A$ then 
$R[U]:= \{b\,|\,\exists u \in U: (u,b)\in R\}$ is the \emph{image} of $U$ and $R[a]$ stands for $R[\{a\}]$. 
$R^{-1}$ is the \emph{inverse relation} and $R^+$ is the \emph{transitive closure} of $R$ if $A=B$.
Also, if $R\subseteq A \!\times\! A$ is an equivalence relation then $\eqcl[R]{a}$
 is the \emph{equivalence class} of the quotient $A/R$ containing $a$.
Furthermore  $\Nat$, $\Natp$, $\Int$ and $\Real$ denote the sets of integers,  positive integer, integer and real numbers, respectively.
For integers: $a|b$ if $a$ is a factor of $b$.
The $modulo$-function is used in the form 
$a \,mod\,b = a - b \cdot \lfloor \frac{a}{b} \rfloor$, which also holds for negative integers $a \in \Int$.
In particular, $-a\,mod\,b = b-a$ for $0<a\leq b$.


\section{Petri Space and Cycloids}  \label{sec-cycloids}

We define (Petri) nets as they will be used in this article. 

 \begin{definition}[\cite{Valk-2019}] \label{def-net} 
 As usual, a net $ \N{} = (S, T, F)$ is defined by non-empty, disjoint sets 
 $S$ of places and $T$ of transitions, connected by a flow relation 
 $F \subseteq (S \cp T) \cup (T \cp S)$ and $X := S \cup T$.
A transition $t \in T$ is \emph{active} or \emph{enabled} in a marking $M \subseteq S$ if $\; ^{\ndot} t \subseteq M \, \land \, t^{\ndot} \cap M = 
\emptyset$\footnote{With the condition $t^{\ndot} \cap M = \emptyset$ we follow Petri's definition, but with no impacts in this article.}.
In this case we obtain $M \stackrel{t}{\rightarrow}M'$ if $M' = M \backslash^{\ndot} t \cup t^{\ndot}$, where 
$^{\ndot} x := F^{-1}[x], \; x^{\ndot} := F[x] $
denotes the input and output elements of an element $x \in X$, respectively. 
 $\stackrel{*}{\rightarrow}$ is the reflexive and transitive closure of $\rightarrow$. 
\end{definition}

\begin{figure}
	\begin{center}
	\includegraphics [scale = 0.22]{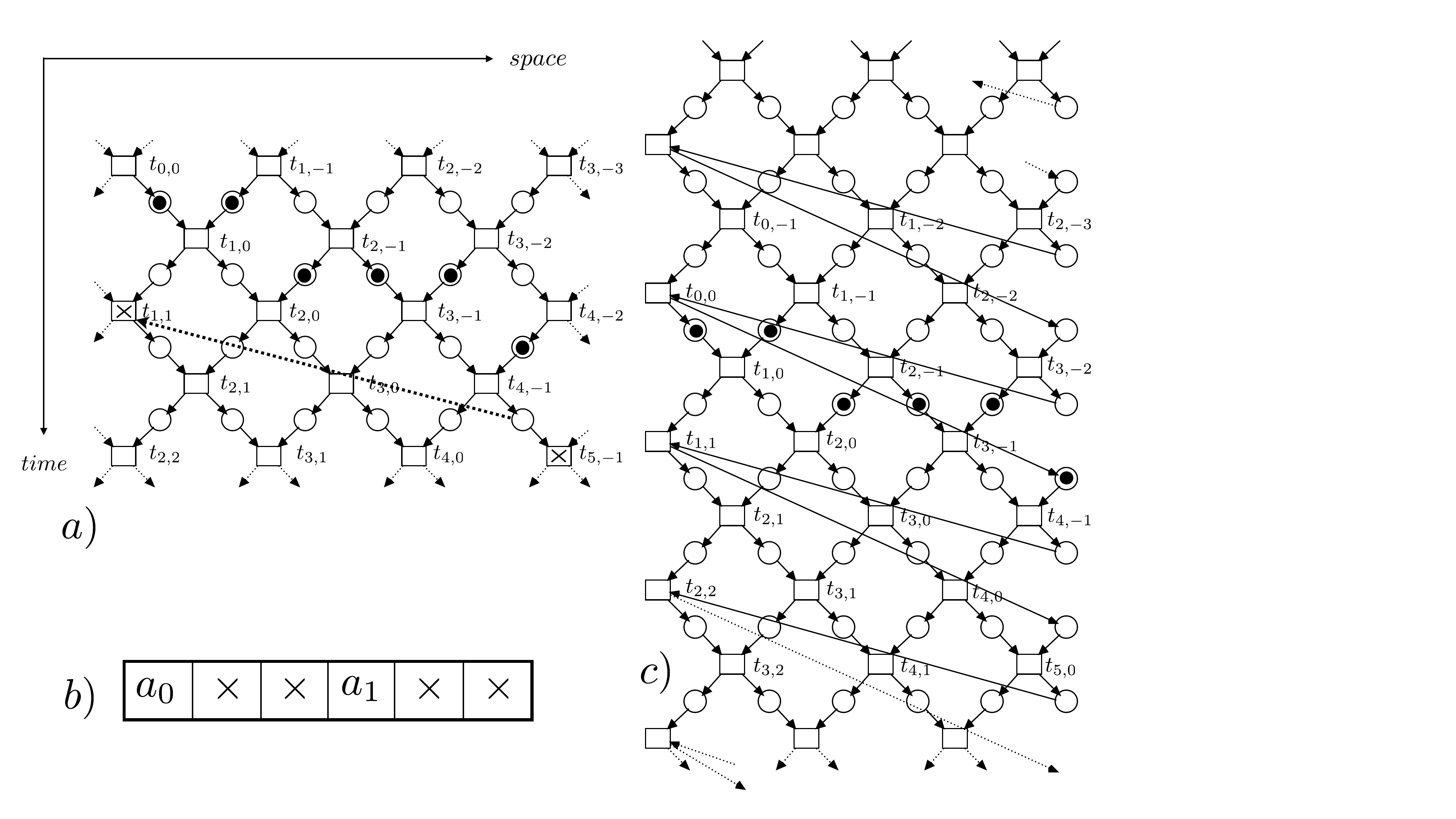}
		\caption{ a) Petri space, b) circular traffic queue and c) time orthoid.}
	\label{petrispace-key}
	 \end{center}
\end{figure}

Petri started with an event-oriented version of the Minkowski space which is called Petri space now. 
Contrary to the Minkowski space, the Petri space is independent of an embedding into $\Int \times \Int$.
It is therefore suitable for the modelling in transformed coordinates as in non-Euclidian space models.
However, the reader will wonder that we will apply linear algebra, for instance using equations of lines.
This is done only to determine the relative position of points.
It can be understood by first topologically transforming and embedding the space into $\Real \times \Real$, calculating the position and then transforming back into the Petri space.
Distances, however, are \underline{not} computed with respect to the Euclidean metric, but by counting steps in the grid of the Petri space, like Manhattan distance or taxicab geometry.

For instance, the transitions of the Petri space might model the moving of items in time and space in an unlimited way.
To be concrete, a coordination system is introduced with arbitrary origin (see Figure~\ref{petrispace-key} a).
The occurrence of transition $t_{1,0}$ in this figure, for instance, can be interpreted as a step of a traffic item (the token in the left input-place) in both space and time direction.
It is enabled by a gap or co-item (the token in the right input-place).
 Afterwads the traffic item can make a new step by the occurrence of transition $t_{2,0}$.
By the following definition the places obtain their names by their input transitions
(see Figure~\ref{P-space+FD} b).

\begin{figure}
	\begin{center}
	\includegraphics[width=0.8\textwidth]{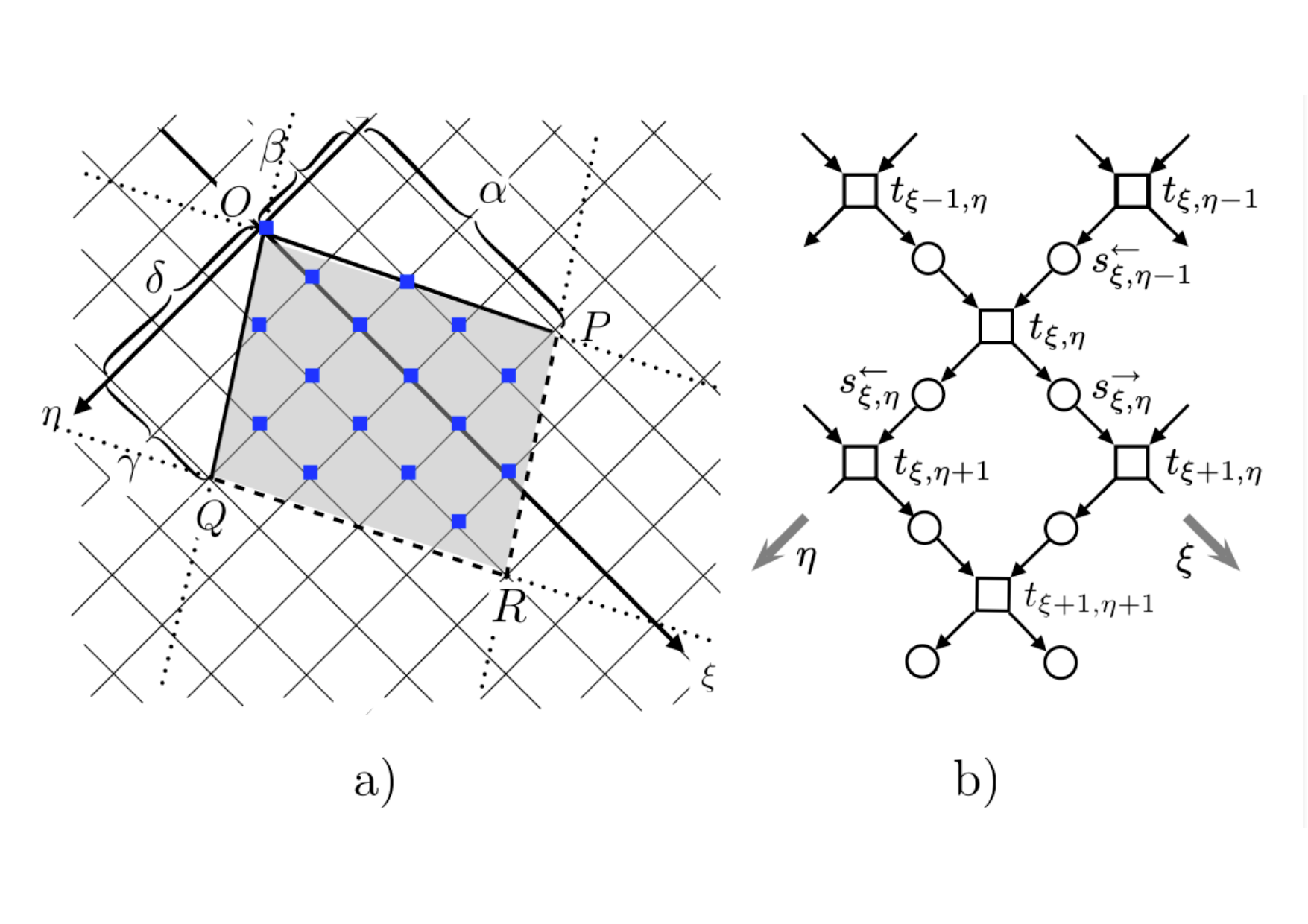}
		\caption{a)  Fundamental parallelogram of $\mathcal{C}(4,2,2,3)$ and  b) Petri space.}
		\label{P-space+FD}
	\end{center}
\end{figure}

\begin{definition} [\cite{Valk-2019}] \label{petrispace}
A $Petri \; space$ 
is defined by the net 
$\PS{1} := (S_1, T_1, F_1)$ \ where
$S_1 = \GSvw[1]\cup \GSrw[1], \;\GSvw[1]  =$ 
   $\col{\gsvw{\xi,\eta}}{\xi,\eta \in \Int},$ 
$\;\GSrw[1] = \col{\gsrw{\xi,\eta}}{\xi,\eta \in \Int }, \GSvw[1] \cap \GSrw[1] = \emptyset $, 
 $ T_1 =$ $ \col{t_{\xi,\eta}}{\xi,\eta \in \Int }, F_1 =$ $ \col{(t_{\xi,\eta},\gsvw{\xi,\eta})}{\xi,\eta \in \Int } \cup \col{(\gsvw{\xi,\eta},t_{\xi+1,\eta})}{\xi,\eta \in \Int } \cup $ 
 $\col{(t_{\xi,\eta},\gsrw{\xi,\eta})}{\xi,\eta \in \Int } \cup \col{(\gsrw{\xi,\eta},t_{\xi,\eta+1})}{\xi,\eta \in \Int }$ (cutout in Figure~\ref{P-space+FD} b).
 $\GSvw[1]$ is the set of \emph{forward places} and $\GSrw[1]$ the set of \emph{backward places}.
 $ \prenbfw{t_{\xi,\eta}}:= \gsvw{\xi-1,\eta}$ is the forward input place of $t_{\xi,\eta}$ and in the same way 
 $ \prenbbw{t_{\xi,\eta}}:= \gsrw{\xi,\eta-1}$,
 $ \postnbfw{t_{\xi,\eta}} := \gsvw{\xi,\eta}$ and
 $\postnbbw{t_{\xi,\eta}}:= \gsrw{\xi,\eta}$ (Figure~\ref{P-space+FD} b).

\end{definition}

In two steps, by a twofold folding with respect to time and space, Petri defined the cyclic structure of a cycloid.
 One of these steps is  
 a folding $f$ with respect to space with $f(i,k)=f(i+\alpha,k-\beta)$, fusing all points $(i,k)$ of the Petri space with $(i+\alpha,k-\beta)$ where $i,k \in \Int, \alpha,\beta \in \Natp$ (\cite{Petri-NTS}, page 37).
 While Petri gave a general motivation, oriented in physical spaces, we interpret the choice of $\alpha$ and $\beta$ by our model of traffic queues.

We assume that our model of a circular traffic queues has six slots containing two items $a_0$ and $a_1$ as shown in Figure~\ref{petrispace-key} b).
These are modelled in Figure~\ref{petrispace-key} a) by the tokens in the forward input places of $t_{1,0}$ and $t_{3,-1}$.
The four co-items (the empty slots in Figure \ref{petrispace-key} b) ) are represented by the tokens in the backward input places of $t_{1,0}, t_{2,0}$ and $t_{3,-1}, t_{4,-1}$.
By the occurrence of $t_{1,0}$ and $t_{2,0}$ the first item can make two steps, as well as the second item by the transitions $t_{3,-1}$ and $t_{4,-1}$, respectively.
Then $a_1$ has reached the end of the queue and has to wait until the first item is leaving its position.
Hence, we have to introduce a precedence restriction between the transitions $t_{1,0}$ and $t_{5,-1}$.
This is done by fusing the transitions 
 $t_{5,-1}$ and the left-hand follower $t_{1,1}$ of $t_{1,0}$ ,  which are marked by a cross in Figure~\ref{petrispace-key} a).  This is implemented by the dotted arc in the same figure.
To determinate $\alpha$ and $\beta$ we set 
 $(5,-1) =(1+\alpha,1-\beta)$ which gives 
 $5 = 1+ \alpha$ or $\alpha = 4$ and 
 $-1 = 1- \beta$ or $\beta = 2$.
By the equivalence relation
 $t_{\xi,\eta} \equiv t_{\xi + 4,\eta -2} $ we obtain the structure in Figure~\ref{petrispace-key} c). 
The resulting still infinite net is called a \emph{time orthoid} (\cite{Petri-NTS}, page 37), as it extends infinitely in temporal future and past.
The second step is a folding with $f(i,k)=f(i+\gamma,k+\delta)$ with $\gamma,\delta \in \Natp$ reducing the system to a cyclic structure also in time direction. 
As shown in \cite{Valk-2020} an equivalent cycloid for the traffic queue of Figure~\ref{petrispace-key} b) has the parameters $ ( \alpha, \beta, \gamma, \delta ) = (4,2,2,2)$.
To keep the example more general, 
in Figure~\ref{P-space+FD} a) the values 
 $ ( \alpha, \beta, \gamma, \delta ) = (4,2,2,3)$ are chosen.
 In this representation of a cycloid, called \emph{fundamental parallelogram}, the squares of the transitions as well as the circles of the places are omitted.
 All transitions with coordinates within the parallelogram belong to the cycloid including those on the lines between $O,Q$ and $O, P$, but excluding those of the points $Q,R,P$ and those on the dotted edges between them.
 All parallelograms of the same shape, as indicated by dotted lines outside the fundamental parallelogram are fused with it.

\begin{definition} [\cite{Valk-2019}] 
\label{cycloid}
A \emph{cycloid} is a net \ $ \zyk( \alpha, \beta, \gamma, \delta ) = (S, T, F)$, defined by parameters 
 \ $ \alpha, \beta, \gamma, \delta \in \Natp$, by a quotient  of the Petri space \ $\PS{1} := (S_1, T_1, F_1)$ \ 
with respect to the equivalence relation
$\mo\zykaeq 
\subseteq X_1 \cp X_1 $
with $X_1 = S_1 \cup T_1$,
 $ \mo\zykaeq[\GSvw[1]] \subseteq \GSvw[1], 
 \mo\zykaeq[\GSrw[1]] \subseteq \GSrw[1],
 \mo\zykaeq[T_1] \subseteq T_1,$ 
 $ x_{\xi,\eta} \zykaeq x_{\xi+m\alpha+n\gamma,\,\eta-m\beta+n\delta} $
for all $ \xi, \eta, m, n \in \Int $\,, $ X = X_1/_\zykaeq $,
 $ \eqcl[\zykaeq]{x} \mb{F} \eqcl[\zykaeq]{y} \: \Leftrightarrow
\exists\,{x'\in\eqcl[\zykaeq]{x}}\,\exists\,y' \in \eqcl[\zykaeq]{y}: \,x' F_1 y' $
 \ for all $x, y \in X_1 $. 
The matrix $\mathbf{A} = \begin{pmatrix} \alpha & \gamma \\ -\beta & \delta \end{pmatrix} $ is called the matrix of the cycloid.
Petri denoted the number $|T|$ of transitions as the area $A$ of the cycloid and proved in \cite{Petri-NTS} its value to $|T| =A =\alpha\delta+\beta\gamma$ which equals the determinant $A = det(\mathbf{A})$.
The embedding of a cycloid in the Petri space is called \emph{fundamental parallelogram} 
(see Figure~\ref{P-space+FD} a).
\end{definition}

\begin{theorem}\label{symmetry}
The following cycloids are isomorphic\footnote{By a net isomorphism \cite{Reisig-Smith-87}.} to $\zyk(\alpha,\beta,\gamma,\delta) $:\\
    $\;\;\;$ a) $\zyk(\beta,\alpha,\delta,\gamma) $, $\;\;\;\;\;\;\;\;\;\;\;($The \emph{dual cycloid} of $\mathcal{C}(\alpha,\beta,\gamma,\delta).)$\\
     $\;\;\;$ b) $\zyk(\alpha,\beta,\gamma - q \cdot \alpha,\delta+q \cdot \beta) $ if $ q \in \Natp $ and  $\gamma > q \cdot  \alpha$, \\
      $\;\;\;$ c) $\zyk(\alpha,\beta,\gamma + q \cdot \alpha,\delta- q \cdot\beta) $ if $  q \in \Natp $ and $\delta > q \cdot \beta$.
\end{theorem}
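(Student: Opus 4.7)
My plan is to treat parts (b) and (c) first, since they reduce to a purely lattice-theoretic observation, and then handle the dual (a) by exhibiting an explicit coordinate swap. Throughout, I will work directly on the Petri-space representatives and only afterwards pass to the quotient.

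For part (b), the key observation is that the pair of generating vectors $(\alpha,-\beta)$ and $(\gamma,\delta)$ for the lattice defining $\zykaeq$ generates exactly the same subgroup of $\Int\cp\Int$ as the pair $(\alpha,-\beta)$ and $(\gamma,\delta)-q(\alpha,-\beta)=(\gamma-q\alpha,\delta+q\beta)$, because the second pair arises from the first by an invertible integer column operation. Consequently the equivalence relations defining $\zyk(\alpha,\beta,\gamma,\delta)$ and $\zyk(\alpha,\beta,\gamma-q\alpha,\delta+q\beta)$ coincide as subsets of $X_1\cp X_1$, so the two quotient nets are literally equal; the hypothesis $\gamma>q\alpha$ is used only to guarantee that $\gamma-q\alpha\in\Natp$, so that the second cycloid is well-defined by Definition~\ref{cycloid}. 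Part (c) is symmetric: $(\gamma+q\alpha,\delta-q\beta)=(\gamma,\delta)+q(\alpha,-\beta)$, and the condition $\delta>q\beta$ again ensures the new parameters lie in $\Natp$.

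For part (a), I will exhibit a net isomorphism $\phi$ from $\zyk(\alpha,\beta,\gamma,\delta)$ to $\zyk(\beta,\alpha,\delta,\gamma)$ induced by the coordinate swap and the interchange of forward and backward places. Concretely, on Petri-space representatives set
\[
\phi(t_{\xi,\eta})=t_{\eta,\xi},\qquad \phi(\gsvw{\xi,\eta})=\gsrw{\eta,\xi},\qquad \phi(\gsrw{\xi,\eta})=\gsvw{\eta,\xi}.
\]
I will verify three things. First, $\phi$ preserves $F_1$: the arcs $(t_{\xi,\eta},\gsvw{\xi,\eta})$ and $(\gsvw{\xi,\eta},t_{\xi+1,\eta})$ are sent to $(t_{\eta,\xi},\gsrw{\eta,\xi})$ and $(\gsrw{\eta,\xi},t_{\eta,\xi+1})$, which are arcs in the Petri space defining the dual cycloid (and dually for the backward arcs). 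Second, $\phi$ intertwines the two equivalence relations: the original relation $(\xi,\eta)\sim(\xi+m\alpha+n\gamma,\,\eta-m\beta+n\delta)$ becomes, after swapping coordinates, $(\eta,\xi)\sim(\eta-m\beta+n\delta,\,\xi+m\alpha+n\gamma)$, which is exactly the equivalence for parameters $(\beta,\alpha,\delta,\gamma)$ with new indices $m'=-m$, $n'=n$. Third, $\phi$ is clearly a bijection on each of the three sorts (transitions, forward places, backward places). Hence $\phi$ descends to a well-defined net isomorphism on the quotients in the sense of \cite{Reisig-Smith-87}.

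The only slightly delicate step is the bookkeeping in part (a), specifically checking that forward arcs of the original net map to backward arcs of the dual (and vice versa) and that the reindexing $m\mapsto -m$, $n\mapsto n$ between the two lattices is correct; parts (b) and (c) are essentially immediate once one notices that the column operations preserve the lattice. I expect no further obstacles beyond this routine verification.
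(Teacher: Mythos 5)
Your proposal is correct. Note that the paper itself gives no proof of Theorem~\ref{symmetry} (it is stated as a known fact, with the underlying development in the cited earlier work), so there is no in-paper argument to compare against; judged on its own, your argument is sound and natural. For (b) and (c) your lattice observation is exactly right: by Theorem~\ref{parameter} the equivalence $\zykaeq$ is determined by the subgroup of $\Int\cp\Int$ generated by the columns $(\alpha,-\beta)$ and $(\gamma,\delta)$ of $\mathbf{A}$, and adding or subtracting $q$ times the first column to the second is a unimodular column operation, so the two equivalence relations, and hence the two quotient nets, coincide (indeed the conclusion is stronger than isomorphism: the nets are equal), with the hypotheses $\gamma>q\alpha$ resp.\ $\delta>q\beta$ only guaranteeing that the new parameters stay in $\Natp$. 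For (a) your swap $\phi(t_{\xi,\eta})=t_{\eta,\xi}$, $\phi(\gsvw{\xi,\eta})=\gsrw{\eta,\xi}$, $\phi(\gsrw{\xi,\eta})=\gsvw{\eta,\xi}$ is an involutive automorphism of the Petri space net, and your computation $m'=-m$, $n'=n$ shows it carries the lattice for $(\alpha,\beta,\gamma,\delta)$ onto the lattice for $(\beta,\alpha,\delta,\gamma)$; one small point worth making explicit is that you need the intertwining in both directions (well-definedness \emph{and} injectivity on the quotient), which follows immediately because $\phi$ is an involution exchanging the two parameter tuples. Since a net isomorphism need not respect the forward/backward classification of places, exchanging $\GSvw[1]$ and $\GSrw[1]$ causes no difficulty.
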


For proving the equivalence of two points in the Petri space the following procedure\footnote{The algorithm is implemented under \url{http://cycloids.de}.} is useful.

\begin{theorem}[\cite{Valk-2020}\cite{valk-arXiv-algebra-2024}] \label{parameter}
Two points $\vec{x}_1, \vec{x}_2\in X_1$ are equivalent $\vec{x}_1 \equiv \vec{x}_2$ if and only if
 for the difference $\vec{v} := \vec{x_2}-\vec{x_1}$
 the parameter vector 
 $\pi(\vec{v}) = \frac{1}{A} \cdot\mathbf{B} \cdot \vec{v}$ has integer values, where $A$ is the area and 
 $\mathbf{B} = \begin{pmatrix} \delta & -\gamma \\ \beta & \alpha \end{pmatrix}$. \\
 In analogy to Definition \ref{cycloid} we obtain 
 $\vec{x}_1 \equiv \vec{x}_2 \Leftrightarrow$ 
 $\exists \;m, n \in \Int: \vec{x_2}-\vec{x_1} =\mathbf{A}\begin{pmatrix} m \\ n \end{pmatrix}$.
\end{theorem}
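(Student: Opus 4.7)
The plan is to derive both equivalences almost simultaneously by unwinding the defining relation of the cycloid in Definition~\ref{cycloid} and then performing a single $2\!\times\!2$ matrix inversion. First I would simply read off from the relation $x_{\xi,\eta} \zykaeq x_{\xi+m\alpha+n\gamma,\,\eta-m\beta+n\delta}$ that $\vec{x}_1 \equiv \vec{x}_2$ holds precisely when there exist integers $m,n$ with
\[
\vec{v} \;=\; \vec{x}_2 - \vec{x}_1 \;=\; \begin{pmatrix} m\alpha + n\gamma \\ -m\beta + n\delta \end{pmatrix} \;=\; \mathbf{A}\begin{pmatrix} m \\ n \end{pmatrix}.
\]
This observation is essentially a restatement of the definition and immediately yields the second (auxiliary) characterization in the theorem.

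Next I would compute the product $\mathbf{B}\mathbf{A}$ directly:
\[
\mathbf{B}\mathbf{A} \;=\; \begin{pmatrix} \delta & -\gamma \\ \beta & \alpha \end{pmatrix}\begin{pmatrix} \alpha & \gamma \\ -\beta & \delta \end{pmatrix} \;=\; \begin{pmatrix} \alpha\delta+\beta\gamma & 0 \\ 0 & \alpha\delta+\beta\gamma \end{pmatrix} \;=\; A\cdot I_2,
\]
using $A = \alpha\delta + \beta\gamma = \det(\mathbf{A})$ as stated in Definition~\ref{cycloid}. Since $\alpha,\beta,\gamma,\delta \in \Natp$, we have $A > 0$, so $\mathbf{A}$ is invertible over $\mathbb{Q}$ with $\mathbf{A}^{-1} = \tfrac{1}{A}\mathbf{B}$.

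Combining the two steps, the vector equation $\vec{v} = \mathbf{A}\binom{m}{n}$ has a unique rational solution, namely $\binom{m}{n} = \mathbf{A}^{-1}\vec{v} = \tfrac{1}{A}\mathbf{B}\vec{v} = \pi(\vec{v})$. Hence an integer solution $(m,n) \in \Int \times \Int$ exists if and only if $\pi(\vec{v})$ has integer entries, which gives the main claim.

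I do not expect a real obstacle: the argument is a direct unwinding of the definition plus a $2\!\times\!2$ matrix inversion. The only point worth checking explicitly is that $A \neq 0$, so that the inverse is well-defined and the uniqueness of $(m,n)$ can be used to convert ``$\pi(\vec{v}) \in \Int^2$'' into the existence of integers solving $\vec{v}=\mathbf{A}\binom{m}{n}$; this is guaranteed by $\alpha,\beta,\gamma,\delta \in \Natp$.
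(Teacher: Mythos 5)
Your proof is correct: reading off from Definition~\ref{cycloid} that $\vec{x}_1\equiv\vec{x}_2$ holds exactly when $\vec{x}_2-\vec{x}_1=\mathbf{A}\binom{m}{n}$ for some $m,n\in\Int$, and then using $\mathbf{B}\mathbf{A}=A\cdot I_2$ with $A=\alpha\delta+\beta\gamma>0$ to convert this into the integrality of $\pi(\vec{v})=\frac{1}{A}\mathbf{B}\vec{v}$, is exactly the natural argument. The paper itself gives no proof here (the theorem is imported from \cite{Valk-2020} and \cite{valk-arXiv-algebra-2024}), so there is nothing in the text to diverge from; your two-step unwinding plus $2\times 2$ inversion is the standard route and is complete as written.
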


\begin{lemma} [\cite{Valk-2019}] \label{normal form}
For any cycloid $\zyk(\alpha,\beta,\gamma,\delta) $ there is a minimal cycle containing the origin $O$ in its fundamental parallelogram representation.
\end{lemma}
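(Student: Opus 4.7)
The plan is to exploit the translation symmetry the cycloid inherits from the Petri space. \textbf{Translation automorphisms.} For every $(a,b) \in \Int \times \Int$ the shift $\tau_{a,b}\colon x_{\xi,\eta}\mapsto x_{\xi+a,\eta+b}$ is a net automorphism of $\PS{1}$ by the translation-invariant form of $F_1$ in Definition~\ref{petrispace}. By Theorem~\ref{parameter} two Petri-space points are equivalent iff their difference lies in the integer lattice spanned by the columns of $\mathbf{A}$, and $\tau_{a,b}$ preserves differences; hence it sends $\zykaeq$-classes to $\zykaeq$-classes and descends to a well-defined net automorphism $\bar\tau_{a,b}$ of $\zyk(\alpha,\beta,\gamma,\delta)$. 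In particular, the automorphism group acts transitively on the transition classes of the cycloid.

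\textbf{Existence of a cycle through $O$.} Any directed cycle in the cycloid lifts to a chain of forward/backward steps in $\PS{1}$ whose total displacement $(p,q)$ with $p,q\ge 0$, not both zero, belongs to the folding lattice. Taking $m=0,n=1$ in Theorem~\ref{parameter} gives the displacement $(\gamma,\delta)$, so $\gamma$ forward steps followed by $\delta$ backward steps starting at $t_{0,0}$ close up to a directed cycle of length $\gamma+\delta$ through $\eqcl[\zykaeq]{t_{0,0}}=O$. Hence the set of directed cycles through $O$ is non-empty and, being a set of positive integer lengths, admits a minimum $L_O$ realized by some cycle $C_O$.

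\textbf{Minimality.} Let $L$ be the minimum length of any directed cycle in $\zyk(\alpha,\beta,\gamma,\delta)$, attained by a cycle $C$ starting at some $\eqcl[\zykaeq]{t_{\xi,\eta}}$. The automorphism $\bar\tau_{-\xi,-\eta}$ from the first step maps $C$ to a cycle of the same length through $O$, so $L_O\le L$, while $L\le L_O$ is trivial. Hence $L_O=L$ and $C_O$ is a minimal cycle of the whole net; lifting $C_O$ to the Petri space starting at the origin displays it inside the fundamental parallelogram, which is the claim. The principal obstacle is the check in the first step that $\bar\tau_{a,b}$ is simultaneously compatible with the partition $X_1/\zykaeq$ and with the quotient flow of Definition~\ref{cycloid}; once this compatibility is established, the remainder reduces to the elementary fact that net automorphisms preserve cycle length.
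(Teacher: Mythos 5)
Your argument is correct and matches the symmetry reasoning this paper itself relies on (the lemma is only cited from \cite{Valk-2019} here, with no proof reproduced; compare the ``by the symmetry of a cycloid'' step in the proof of Theorem~\ref{th-f-b-cycle}): translations of the Petri space respect the identification because equivalence depends only on coordinate differences, hence they induce net automorphisms of the quotient acting transitively on transition classes, and shifting a globally minimal cycle to pass through $O$ --- whose cycle set is non-empty by the explicit $\gamma$ forward / $\delta$ backward closed path --- gives the claim. The only cosmetic slack is the closing remark that the lift lies \emph{inside} the fundamental parallelogram, which is neither needed nor in general true; the lemma only asserts that some minimal cycle contains the origin transition $O$.
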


\begin{theorem} \label{th-f-b-cycle}
In a cycloid   $ \mathcal{C}( \alpha, \beta, \gamma, \delta ) $ with area $A$ the length of a forward-cycle is 
$p = \frac{A}{gcd(\beta,\delta)} $ and length of a backward-cycle is $p' =\frac{A}{gcd(\alpha,\gamma)}$. 
\end{theorem}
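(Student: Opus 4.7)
A forward-cycle, threaded along the forward places, advances $t_{\xi,\eta}\to t_{\xi+1,\eta}$ at each step, so its length $p$ in the cycloid is the least positive integer for which $t_{\xi,\eta}\equiv t_{\xi+p,\eta}$ in the quotient by $\equiv$. My plan is to translate this into a linear Diophantine system via Theorem~\ref{parameter}, extract the minimum positive $p$ by a $\gcd$-argument, and then obtain the backward-cycle length either analogously or by the duality of Theorem~\ref{symmetry}(a).

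For the forward case, by the second formulation of Theorem~\ref{parameter}, $t_{\xi,\eta}\equiv t_{\xi+p,\eta}$ iff $\binom{p}{0}=\mathbf{A}\binom{m}{n}$ for some $m,n\in\Int$, which unfolds to
\[
m\alpha+n\gamma=p,\qquad m\beta=n\delta.
\]
Setting $d:=\gcd(\beta,\delta)$, $\beta=d\beta'$, $\delta=d\delta'$ with $\gcd(\beta',\delta')=1$, the second equation forces $(m,n)=k(\delta',\beta')$ for some $k\in\Int$. Substitution into the first gives $p=k(\alpha\delta'+\gamma\beta')=kA/d$, and the smallest positive choice is $p=A/\gcd(\beta,\delta)$.

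For the backward cycle I would either repeat the argument for the system $\binom{0}{p'}=\mathbf{A}\binom{m}{n}$, where the symmetric bookkeeping with $d':=\gcd(\alpha,\gamma)$ forces $(m,n)=k(\gamma',-\alpha')$ and hence $|p'|=|k|A/d'$; or invoke Theorem~\ref{symmetry}(a), noting that the dual cycloid $\zyk(\beta,\alpha,\delta,\gamma)$ is isomorphic to the original, has the same area, and exchanges forward- with backward-cycles, so the forward formula applied to the dual yields $p'=A/\gcd(\alpha,\gamma)$.

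The only delicate point is the minimality claim, which rests on the elementary observation that $\gcd(\beta',\delta')=1$ in $m\beta'=n\delta'$ forces $\delta'\mid m$ (and analogously for the backward case). Existence of the cycle itself is automatic, since each transition has a unique forward-successor and the cycloid is finite, so iteration eventually returns to the starting transition. I expect no real obstacle; the proof reduces to this routine $\gcd$-bookkeeping once Theorem~\ref{parameter} is applied.
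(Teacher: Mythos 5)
Your proof is correct, and while it rests on the same pivot as the paper -- Theorem~\ref{parameter} plus gcd bookkeeping, with the backward case obtainable from the duality of Theorem~\ref{symmetry}(a) -- it uses the other half of that theorem and a different arithmetic route. The paper applies the integrality criterion for $\pi(\vec v)=\frac{1}{A}\mathbf{B}\vec v$ to the point $(\xi,0)$, so it must minimize $\xi$ subject to $A\mid\delta\xi$ and $A\mid\beta\xi$; this yields $\xi=A/\omega$ with $\omega=\gcd(\gcd(\beta,A),\gcd(\delta,A))$, and the bulk of the paper's proof is the chain of identities (its Equations (1)--(3), exploiting $A=\alpha\delta+\beta\gamma$) needed to show $\omega=\gcd(\beta,\delta)$. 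You instead use the lattice formulation $\binom{p}{0}=\mathbf{A}\binom{m}{n}$ and parametrize the integer solutions of $-m\beta+n\delta=0$ through the coprime parts $\beta',\delta'$, which gives $p=kA/\gcd(\beta,\delta)$ in one step and makes minimality transparent; the auxiliary gcd identities are never needed. What your route buys is a shorter and more self-contained argument (and your direct computation for the backward cycle, with $(m,n)=k(\gamma',-\alpha')$, even dispenses with the duality appeal if desired); what the paper's route makes explicit is the divisibility structure of the equivalent points along the $\xi$-axis, i.e.\ which multiples of $A/\gcd(A,\beta)$ and $A/\gcd(A,\delta)$ occur. One cosmetic remark: since the equivalence depends only on the difference vector, your formulation automatically shows all forward-cycles have the same length, a fact the paper instead dispatches by appealing to the symmetry of the cycloid.
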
 

\begin{proof} By the symmetry of a cycloid all  forward-cycles have the same length.  Therefore it is sufficient to consider the  forward-cycle starting in the origin $(0,0)$. It is given by proceeding on the $\xi-$axis until  for the first time
a point $(\xi,0)$ is obtained, which is equivalent to the origin. By Theorem \ref{parameter} a necessary and sufficient condition for 
$\begin{pmatrix}0 \\ 0  \end{pmatrix}  \equiv \begin{pmatrix} \xi \\  0 \end{pmatrix}$ is 
$\pi(\begin{pmatrix}\xi \\ 0  \end{pmatrix}  -  \begin{pmatrix} 0\\  0 \end{pmatrix}) = 
\frac{1}{A} \begin{pmatrix} \delta & -\gamma \\ \beta & \alpha \end{pmatrix}  \begin{pmatrix}\xi \\ 0  \end{pmatrix}  =
\frac{1}{A}  \begin{pmatrix}\delta \cdot \xi \\ \beta \cdot \xi   \end{pmatrix} \in \Int^2$. This is equivalent to 
$\frac{\delta}{A} \cdot \xi \in \Int \; \land \frac{\beta}{A} \cdot \xi \in \Int $. $\xi = A$ is a solution in $\frac{\delta}{A} \cdot \xi \in \Int $, but $\xi = \frac{A}{gcd(A,\delta)} $ is minimal. The same with  $\xi = \frac{A}{gcd(A,\beta)} $ for $\frac{\beta}{A} \cdot \xi \in \Int $ and both together give
$ \xi = \frac{A}{\omega} $ with $\omega = gcd(gcd(\beta,A),gcd(\delta,A))$.  To finish the proof it is sufficient to prove that $\omega$ equals
$ gcd(\beta,\delta) $. To this end we first prove: 
\begin{equation}\label{Gl1}
gcd(\beta,\alpha \cdot \delta + \beta \cdot \gamma) =   gcd(\beta,\alpha \cdot \delta)
 \end{equation}
 In fact, if $q$ divides $\beta$ and $\alpha \cdot \delta + \beta \cdot \gamma$ then $q$ divides $\alpha \cdot \delta$.
Conversely, if $q$ divides $\beta$ and $\alpha \cdot \delta$ then $q$ divides $\alpha \cdot \delta + \beta \cdot \gamma$.
In the same way the following equation holds.
\begin{equation}\label{Gl2}
gcd(\delta,\alpha \cdot \delta + \beta \cdot \gamma) =   gcd(\delta,\beta \cdot \gamma)
 \end{equation}
 The next equation
 \begin{equation}\label{Gl3} 
 gcd(gcd(\beta,\alpha \cdot \delta),gcd(\delta,\beta \cdot \gamma)) = gcd(\beta,\delta)
 \end{equation}
  is proved by the following true logical formula: 
  $q | \beta  \land q | \alpha \cdot \delta \land q | \delta  \land q | \beta \cdot \gamma \; \Leftrightarrow \; q | \beta  \land  q | \delta $.
  Using these equations we obtain: \\
  $\omega 
  {\underset{\text{}}{=}} 
  gcd(gcd(\beta,\alpha \cdot \delta + \beta \cdot \gamma),gcd(\delta,\beta,\alpha \cdot \delta + \beta \cdot \gamma))  {\underset{\text{(\ref{Gl1})(\ref{Gl2})}}{=}} \\
  gcd(gcd(\beta,\alpha \cdot \delta),gcd(\delta,\beta \cdot \gamma))  {\underset{\text{(\ref{Gl3})}}{=}} gcd(\beta,\delta)$.
  The case of backward-cycles is proved in a similar way. Alternatively, we use the isomorphism of Theorem \ref{symmetry} a)
   $ \mathcal{C}( \alpha, \beta, \gamma, \delta )$ \\ $ \simeq  \mathcal{C}(  \beta, \alpha, \delta, \gamma ) $ and derive the second part from the first part by the substitution $\alpha \mapsto \beta, \beta \mapsto \alpha, \gamma \mapsto \delta,\delta \mapsto \gamma $.
 \qed\end{proof}

\begin{definition} [\cite{Valk-2019}]  \label{standard-M0}
For a cycloid $\zyk(\alpha,\beta,\gamma,\delta)$ 
we define a cycloid-system\\ 
$\mathcal{C}(\alpha,\beta,\gamma,\delta,M_0)$ 
or $\mathcal{C}(\N{},M_0)$ by adding the standard initial marking: 

$ M_0 =$  $ \col{ \gsvw{\xi,\eta} \in \GSvw[1] }{\, \beta\xi + \alpha\eta \,\leq\, 0 \  \land \  \beta(\xi+1) + \alpha\eta \,>\, 0  }/_\zykaeq \ \,\cup $\\
$\; \; \; \;\; \; \; \; \; \; \; \; \; \;\; \;\col{\gsrw{\xi,\eta} \in \GSrw[1] }{\, \beta\xi + \alpha\eta \,\leq\, 0 \   \land \  \beta\xi + \alpha(\eta+1) \,>\, 0 }/_\zykaeq $. \\
 Note, that by \cite{Valk-2019}  a standard initial marking contains $\beta$ tokens in forward and $\alpha$ tokens in backward places.
\end{definition}
 
 The motivation of this definition is given in  \cite{Valk-2019}. See Figure \ref{c-4-3-3-3}b) for an example of a cycloid with standard initial marking. 

 \begin{definition} \label{regular-M0}
 For a cycloid $\mathcal{C}(\alpha,\beta,\gamma,\delta)$ a regular initial marking is defined by a number of $\beta$ forward places 
 $\{\gsvw{-1,i} | \; 0 \geq i > -\beta\}$ and a number of $\alpha$ backward places 
 $\{\gsrw{i,-\beta} | \; 0 \leq i < \alpha\}$.
 Note, that a regular initial marking contains $\beta$ tokens in forward and $\alpha$ tokens in backward places.
 \end{definition}
 
 \begin{corollary} \label{number_of_cycles}
A   cycloid  $ \mathcal{C}( \alpha, \beta, \gamma, \delta) $  contains $gcd(\beta,\delta)$ disjoint forward cycles and  $gcd(\alpha,\gamma)$ disjoint backward cycles.
In a cycloid system $ \mathcal{C}( \alpha, \beta, \gamma, \delta, M_0 ) $  with standard or regular initial marking $M_0$ the number of tokens in a forward cycle is $\frac{\beta}{gcd(\beta,\delta)} $ and  $\frac{\alpha}{gcd(\alpha,\gamma)} $  in an backward cycle.
\end{corollary}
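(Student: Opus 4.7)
The plan is to handle the four assertions as two (forward and backward), linked by the duality $\mathcal{C}(\alpha,\beta,\gamma,\delta)\simeq \mathcal{C}(\beta,\alpha,\delta,\gamma)$ of Theorem~\ref{symmetry}(a); I only spell out the forward case in detail.

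The cycle count follows directly from Theorem~\ref{th-f-b-cycle}. Every forward cycle has length $p=A/gcd(\beta,\delta)$ and the forward cycles partition the set of forward places. Since each transition of the cycloid contributes a unique forward output place, $|\GSvw|=|T|=A$, so the number of disjoint forward cycles is exactly $A/p=gcd(\beta,\delta)$.

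For the token count I first identify the cycle of a given forward place. Advancing along a forward cycle keeps $\eta$ fixed in the Petri space, so by Theorem~\ref{parameter} the places $\gsvw{\xi_1,\eta_1}$ and $\gsvw{\xi_2,\eta_2}$ belong to the same cycle iff $\eta_2-\eta_1=-m\beta+n\delta$ for some $m,n\in\Int$ (the $\xi$-component can always be adjusted by an appropriate lattice shift), which is equivalent to $gcd(\beta,\delta)\mid(\eta_2-\eta_1)$. I then claim that in both initial markings the $\beta$ marked forward places admit Petri-space representatives with $\eta\in\{0,-1,\ldots,-\beta+1\}$. For the regular marking this is immediate from Definition~\ref{regular-M0}. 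For the standard marking it follows from a short lattice argument: inside the strip $\beta\xi+\alpha\eta\in(-\beta,0]$, any two chosen representatives satisfy $\beta(\xi_2-\xi_1)+\alpha(\eta_2-\eta_1)\in(-\beta,\beta)$, while a putative equivalence forces this quantity to equal $nA$, so $n=0$ (since $A=\alpha\delta+\beta\gamma>\beta$) and hence $m=0$ as well, ruling out collisions. Therefore both markings select exactly $\beta$ consecutive integer $\eta$-values, and these cover each residue class modulo $gcd(\beta,\delta)$ exactly $\beta/gcd(\beta,\delta)$ times, yielding the asserted uniform token distribution across the forward cycles.

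The main obstacle is the short lattice step that rules out accidental equivalences among the $\beta$ standard-marking representatives; once this is in place the remainder collapses to elementary counting of residues among consecutive integers, and the backward half of the corollary is obtained verbatim by applying the forward proof to the dual cycloid of Theorem~\ref{symmetry}(a) under the substitution $\alpha\leftrightarrow\beta,\ \gamma\leftrightarrow\delta$.
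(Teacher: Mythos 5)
Your proposal is correct, and while the first half coincides with the paper, the second half takes a genuinely different route. For the number of cycles you argue exactly as the paper does (all forward cycles are disjoint and have length $A/\gcd(\beta,\delta)$ by Theorem~\ref{th-f-b-cycle}, so dividing the $A$ transitions, or equivalently the $A$ forward places, gives $\gcd(\beta,\delta)$ cycles). For the token count, however, the paper simply divides the $\beta$ forward tokens by the number of cycles, tacitly relying on the cycloid's symmetry to distribute them evenly; you instead prove the uniform distribution explicitly: using Theorem~\ref{parameter} you identify the forward cycle through a place with the residue class of its $\eta$-coordinate modulo $\gcd(\beta,\delta)$ (correct, since a forward step fixes $\eta$ and the lattice contributes exactly the multiples of $\gcd(\beta,\delta)$ in the $\eta$-direction), you check that both the regular and the standard marking occupy $\beta$ consecutive $\eta$-values without collisions (your $nA$-argument inside the strip is sound, since $A=\alpha\delta+\beta\gamma>\beta$, and with $n=0$ the $\eta$-window of width $\beta$ forces $m=0$), and you then count residues among $\beta$ consecutive integers, using $\gcd(\beta,\delta)\mid\beta$. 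This buys a rigorous justification of the ``each cycle'' claim that the paper's averaging step leaves implicit, and it even re-derives the cycle count as the number of residue classes. One small point worth a sentence in a polished write-up: for the standard marking you should also note that every marked class has a representative with $\eta$ in the chosen window, which follows either from the lattice vector $(\alpha,-\beta)$ shifting the marking strip by $\beta$ in the $\eta$-direction, or simply from the paper's own remark (Definition~\ref{standard-M0}) that the standard marking has exactly $\beta$ forward tokens; this is a one-line observation, not a gap in the method. The backward statements follow by the duality of Theorem~\ref{symmetry}(a) exactly as you and the paper both indicate.
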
 

\begin{proof} 
Since all forward cycles are disjoint with a number of $\frac{A}{gcd(\beta,\delta)} $ transitions (Theorem \ref{th-f-b-cycle}), the number of  forward cycles is
$\frac{A}{(\frac{A}{gcd(\beta,\delta)} )} = gcd(\beta,\delta) $. Similarly, for backward cycles the number is 
$\frac{A}{(\frac{A}{gcd(\alpha,\gamma)}) } = gcd(\alpha,\gamma) $. Since in a cycloid system there are $\beta$ tokens in forward places (Definitions \ref{standard-M0}, \ref{regular-M0}), each forward cycle contains $\frac{\beta}{gcd(\beta,\delta)} $ tokens. In the same way, since there are $\alpha$ tokens in  backward places the number of tokens in a backward cycle is $\frac{\alpha}{gcd(\alpha,\gamma)} $.
\qed\end{proof}

 When working with cycloids it is sometimes important to find for a transition outside the fundamental parallelogram the equivalent element inside.  
For instance the first set in Definition \ref{regular-M0} of a regular initial marking contains the element $\gsvw{-1,1 - \beta}$. It is named by its input transition $t_{-1,1 - \beta}$, which is outside the fundamental parallelogram. To obtain the corresponding place of the cycloid net the equivalent transition of $t_{-1,1 - \beta}$ inside the fundamental parallelogram has to be computed.
In general,
by enumerating all elements of the fundamental parallelogram  and applying the equivalence test from Theorem \ref{parameter}  a runtime is obtained, which already fails for small cycloids. The following theorem allows for a better algorithm, which is linear with respect to the cycloid parameters.

\begin{theorem} [\cite{Valk-2020}]\label{to_FP}
For any element $\vec{u} = (u,v)$ of the Petri space the (unique) equivalent element  of the fundamental parallelogram is 
$\vec{x} = \vec{u} - \mathbf{A}\begin{pmatrix} m \\ n  \end{pmatrix}$ where \\
$m = \lfloor  \frac{1}{A}(u\delta - v \gamma)\rfloor$ and $n = \lfloor  \frac{1}{A}(v\alpha + u\beta)\rfloor$.
\end{theorem}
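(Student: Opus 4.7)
The plan is to translate the problem into linear algebra using the characterization of equivalence from Theorem \ref{parameter}, and then reduce the task to the standard fact that any real number $r$ admits a unique decomposition $r = k + s$ with $k \in \Int$ and $s \in [0,1)$ via $k = \lfloor r \rfloor$.

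First I would observe that the fundamental parallelogram consists precisely of those points $\vec{x}$ that can be written as $\vec{x} = \mathbf{A}\binom{s}{t}$ with $0 \leq s,t < 1$: this is exactly the description of the parallelogram spanned by the column vectors $\binom{\alpha}{-\beta}$ and $\binom{\gamma}{\delta}$, including the two edges $OQ$ (where $t=0$) and $OP$ (where $s=0$) but excluding the two opposite edges, in agreement with the convention stated after Definition \ref{cycloid}. Next I would compute $\mathbf{A}^{-1}$ and verify $\mathbf{A}\mathbf{B} = A\cdot I$, so that $\mathbf{A}^{-1} = \tfrac{1}{A}\mathbf{B}$ with $\mathbf{B}$ as in Theorem \ref{parameter}.

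Now by Theorem \ref{parameter}, the element $\vec{x} = \vec{u} - \mathbf{A}\binom{m}{n}$ is equivalent to $\vec{u}$ for every choice of $m,n \in \Int$, so the task reduces to choosing $m,n$ such that $\vec{x}$ lands in the fundamental parallelogram. Applying $\mathbf{A}^{-1}$ to both sides, this condition becomes
\[
\mathbf{A}^{-1}\vec{u} - \binom{m}{n} \in [0,1) \times [0,1).
\]
Writing $\mathbf{A}^{-1}\vec{u} = \tfrac{1}{A}\mathbf{B}\binom{u}{v} = \tfrac{1}{A}\binom{u\delta - v\gamma}{u\beta + v\alpha}$, this forces $m$ and $n$ to be exactly the integer parts (floors) of the two coordinates, which yields the stated formulas for $m$ and $n$. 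Uniqueness of $\vec{x}$ in the fundamental parallelogram, already guaranteed by the definition of the quotient, also drops out here since the floor function is single-valued.

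I do not expect a genuine obstacle: once the representation $\mathbf{A}^{-1} = \tfrac{1}{A}\mathbf{B}$ is in hand, the argument is a direct application of the floor decomposition. The only mild care needed is to confirm that the half-open convention $[0,1) \times [0,1)$ truly matches the author's convention on which edges of the fundamental parallelogram are included; this is checked against the remark that points on $OQ$ and $OP$ belong to the cycloid while those on the edges through $R$ do not.
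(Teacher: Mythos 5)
Your argument is correct: from $\mathbf{B}\mathbf{A}=A\cdot I$ you get $\mathbf{A}^{-1}=\tfrac{1}{A}\mathbf{B}$, the fundamental parallelogram is exactly the image of $[0,1)\times[0,1)$ under $\mathbf{A}$ given the stated edge convention (the edges through $O$ included, the points $Q,R,P$ and the opposite edges excluded), and the unique floor decomposition of $\mathbf{A}^{-1}\vec{u}$ then forces $m=\lfloor\tfrac{1}{A}(u\delta-v\gamma)\rfloor$, $n=\lfloor\tfrac{1}{A}(v\alpha+u\beta)\rfloor$ and gives both existence and uniqueness of the representative, with equivalence guaranteed by Theorem~\ref{parameter}. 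The paper itself contains no proof of Theorem~\ref{to_FP} (it is imported from \cite{Valk-2020}), but your change-of-basis-plus-floor argument is the standard derivation and matches the intended one.
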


\vspace{-0.2  cm}
\section{Regular Cycloids}    \label{sec-regular-cyc}

Circular traffic queues are composed  by a number  $c$ of sequential and interacting processes of equal length. In the formalism of cycloids this corresponds to a number of $\beta$ disjoint forward cycles  of equal length $p$. Cycloids with such a property are called \emph{regular}.

\begin{definition} \label{regular-cycloid}
 A cycloid system $ \mathcal{C}=\mathcal{C}(\alpha,\beta,\gamma,\delta, M_0)$ is called \emph{regular} if  each forward cycle contains exactly one token with respect to its (standard or regular) initial marking. The forward cycle is called \emph{process}
 in this case (also sometimes item-process, car-process). 
  $ \mathcal{C}$ is called \emph{co-regular} if  each backward cycle contains exactly one token with respect to its (standard or regular) initial marking.
 The backward cycle is called \emph{co-process}
 in this case (also sometimes co-item-process, co-car-process).
\end{definition}

\begin{theorem}   \label{th-regular}
A cycloid system $\mathcal{C}=\mathcal{C}(\alpha,\beta,\gamma,\delta,M_0)$  is regular if and only if $ \beta | \delta $. In this case it contains a number of $\beta$ processes, each  of length $p = \frac{A}{\beta} $.
$\mathcal{C}$  is co-regular if and only if $ \alpha | \gamma$. In this case it contains a number of $\alpha$ co-processes, each  of length $p' = \frac{A}{\alpha} $. 
\end{theorem}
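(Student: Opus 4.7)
The plan is to derive this statement as a direct consequence of Corollary~\ref{number_of_cycles} together with Theorem~\ref{th-f-b-cycle}, since these already give both the number of forward/backward cycles and the token count per cycle in terms of the parameters. No further structural analysis of the cycloid should be needed.

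First I would recall from Corollary~\ref{number_of_cycles} that, under the standard or regular initial marking, every forward cycle carries exactly $\beta/\gcd(\beta,\delta)$ tokens. By Definition~\ref{regular-cycloid}, the system is regular precisely when this number equals $1$, which translates to $\gcd(\beta,\delta)=\beta$, i.e.\ $\beta \mid \delta$. This chain of equivalences gives the first biconditional. Once $\beta\mid\delta$ is assumed, Corollary~\ref{number_of_cycles} tells us that the number of forward cycles is $\gcd(\beta,\delta)=\beta$, and Theorem~\ref{th-f-b-cycle} gives their common length as $A/\gcd(\beta,\delta)=A/\beta$, yielding the claimed count and length of the processes.

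For the co-regular part I would proceed symmetrically, using the second halves of the same two results: each backward cycle contains $\alpha/\gcd(\alpha,\gamma)$ tokens, so co-regularity is equivalent to $\gcd(\alpha,\gamma)=\alpha$, i.e.\ $\alpha\mid\gamma$, and under this condition there are $\alpha$ backward cycles, each of length $A/\alpha$. Alternatively one can appeal directly to the duality isomorphism $\mathcal{C}(\alpha,\beta,\gamma,\delta)\simeq\mathcal{C}(\beta,\alpha,\delta,\gamma)$ from Theorem~\ref{symmetry}(a) and obtain the co-regular statement from the regular one by swapping $\alpha\leftrightarrow\beta$ and $\gamma\leftrightarrow\delta$.

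I do not anticipate any genuine obstacle here; the only subtlety is to be careful that the quoted Corollary~\ref{number_of_cycles} indeed applies to both the standard and the regular initial marking (both place $\beta$ tokens in forward places and $\alpha$ tokens in backward places, which is exactly the hypothesis used), so the two cases of initial marking in Definition~\ref{regular-cycloid} do not need to be treated separately.
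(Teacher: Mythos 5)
Your proposal is correct and takes essentially the same route as the paper's own proof: both obtain the biconditional from the per-cycle token count $\beta/\gcd(\beta,\delta)$ (resp.\ $\alpha/\gcd(\alpha,\gamma)$) of Corollary~\ref{number_of_cycles} and the process length from Theorem~\ref{th-f-b-cycle}. The only point the paper adds explicitly is the remark that both the standard and the regular initial marking place $\beta$ tokens in forward and $\alpha$ tokens in backward places, which is exactly the subtlety you already flagged.
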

\begin{proof}
By Corollary  \ref{number_of_cycles},
if a forward cycle contains exactly one token  we obtain
$\frac{\beta}{gcd(\beta,\delta)} = 1 \; \Leftrightarrow \;  \beta =
 gcd(\beta,\delta) \; \Leftrightarrow \; \beta | \delta $. By Theorem  \ref{th-f-b-cycle} the length of processes is 
 $p = \frac{A}{gcd(\beta,\delta)}  = \frac{A}{\beta}$. 
 Similarly, for a co-process $\frac{\alpha}{gcd(\alpha,\delta)} = 1 \; \Leftrightarrow \;  
\beta | \delta $. 
The length of a co-process is $p' =\frac{A}{gcd(\alpha,\gamma)} =\frac{A}{\alpha}$.
By  \cite{Valk-2019} a standard or  initial marking contains a number of $\beta$ tokens in forward places and 
$\alpha$ tokens in backward places. The same follows immediately for a regular initial marking  from Definition \ref{regular-M0}.
\qed\end{proof}

%
\begin{figure}[t]
  \includegraphics[width=1\textwidth]{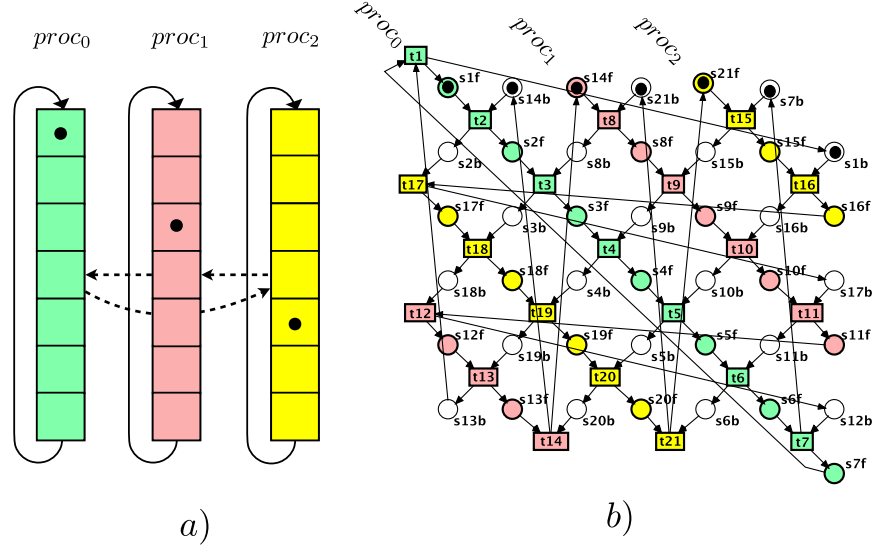}
                     \caption{Cycloid $\mathcal{C}(4,3,3,3)$ in a) and with regular coordinates in b).}
                             \label{c-4-3-3-3}
                                  \end{figure}

To exploit the structure of a regular cycloid we
 define specific coordinates, called \emph{ regular coordinates}. 
 The process of  a traffic item $a_0$ starts with transition $t_{0,0}$ which is denoted $[t_0,a_0]$, having the input place $[s_{p-1},a_0]$. The next transitions are $[t_1,a_0]$ up to $[t_{p-1},a_0]$ and then returning to $[t_0,a_0]$. The other processes for $a_1$ to $a_{c-1}$ (with $\beta = c$) are denoted in the same way (see Figure \ref{c-4-3-3-3} b). 
 As the process of $a_j$ starts in position $j$ of the queue, its initial token is in $[s_{j\ominus_p  1},a_j]$.

\begin{definition}  \label{d-reg-coord}
Given a regular cycloid  $\mathcal{C}(\alpha,\beta,\gamma,\delta), $\emph{ regular coordinates} are defined as follows: transitions of a $a_j$-process $0  \leq j < \beta$, each with length $p$, are denoted by $\{[t_0,a_j],\cdots,[t_{p-1},a_j]\}$. For each transition $[t_{i},a_{j}]$ we define   the output places by
$ \postnbfw{[t_{i},a_{j}]} := [s_i,a_j]$
and  $ \postnbbw{[t_{i},a_{j}]}:= [s'_i,a_j]$  and the output transition by
 $[s_i,a_j]{}^{\ndot} := [t_{i\oplus_p 1},a_j]$ for $0  \leq i < p, 0  \leq j < c$.
 Regular coordinates are related to standard coordinates of the Petri space by defining the following initial condition
$stand([t_0,a_j]) := t_{-j,-j}$ for $0 \leq j < c$ (taking the equivalent transition of $ t_{-j,-j}$ in the fundamental parallelogram). 
\end{definition}

For instance, in  Figure \ref{c-4-3-3-3}  b) we obtain for the last formula in Definition \ref{d-reg-coord}: $[t_{0},a_{2}] := t_{-2,-2} \equiv t_{1,1}$. While the output place $[s'_i,a_j]$ in regular coordinates takes its name from the input transition in Definition \ref{d-reg-coord}, it remains to determine its output transition according to the corresponding regular coordinates. 


\begin{lemma} \label{le-reg-coord}
In a regular cycloid the injective mapping  \emph{stand} from regular to standard coordinates is given by $stand([t_i,a_j])= t_{i-j,-j}$ 
for  $0 \leq i < p$ and $0 \leq j < \beta$ 
(modulo equivalent transitions). 
The output transition of  $ [s'_i,a_j]$ is \\
a) $ [s'_i,a_0]{}^{\ndot} =[t_{(i+\beta+\alpha-1)\,mod\,p},a_{\beta-1}]$  for $j = 0$ and \\
b)  $[s'_i,a_j]^{\ndot} = [t_{i \ominus_p 1},a_{j \ominus_\beta  1}]$ for $0< j < c$. \\
c) If $p = \alpha + \beta$ the two cases coincide.
\end{lemma}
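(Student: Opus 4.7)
The plan is to first establish the formula $stand([t_i,a_j])=t_{i-j,-j}$, then translate the backward-place output question into the Petri space via $stand$, and finally identify the resulting standard-coordinate transition with its regular-coordinate representative by means of Theorem~\ref{parameter}.

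First I would fix $j$ and argue by induction on $i$. The initial condition of Definition~\ref{d-reg-coord} gives $stand([t_0,a_j])=t_{-j,-j}$. Since $[t_0,a_j],\ldots,[t_{p-1},a_j]$ form a forward cycle and each forward step $[t_i,a_j]\to[t_{i+1},a_j]$ corresponds in the Petri space to advancing the forward token along the $\xi$-axis (leaving $\eta$ fixed), the inductive step yields $stand([t_i,a_j])=t_{i-j,-j}$ (modulo equivalence) for all $0\leq i<p$. Injectivity is immediate from the one-token-per-process property of regular cycloids (Theorem~\ref{th-regular}).

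Next I would use that the unique output transition of the backward place $\gsrw{\xi,\eta}$ in the Petri space is $t_{\xi,\eta+1}$. Applied to $[t_i,a_j]$ with $stand([t_i,a_j])=t_{i-j,-j}$, this means $[s'_i,a_j]^{\ndot}$ corresponds in standard coordinates to $t_{i-j,-j+1}$. For part b) with $0<j<c$, the candidate $[t_{i\ominus_p 1},a_{j-1}]$ has standard coordinate $t_{(i-1)-(j-1),-(j-1)}=t_{i-j,-j+1}$ when $i\geq 1$, matching directly. For $i=0$ the candidate's standard coordinate is $t_{p-j,-j+1}$, so I need $t_{-j,-j+1}\equiv t_{p-j,-j+1}$, i.e.\ the vector $(p,0)$ is equivalent to $(0,0)$. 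By Theorem~\ref{parameter}, $\pi(p,0)=\frac{1}{A}(p\delta,p\beta)=(\delta/\beta,1)$, which is integer thanks to $\beta\mid\delta$ (regularity, Theorem~\ref{th-regular}).

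For part a) with $j=0$, the target in standard coordinates is $t_{i,1}$, and I need to find $0\leq i'<p$ so that $stand([t_{i'},a_{\beta-1}])=t_{i'-\beta+1,-\beta+1}$ is equivalent to $t_{i,1}$. By Theorem~\ref{parameter}, I need $(i'-\beta+1-i,-\beta)=\mathbf{A}\binom{m}{n}$ for some integers $m,n$. Choosing $(m,n)=(1,0)$ gives the second coordinate $-\beta$ automatically and the first coordinate $\alpha$, so $i'=i+\alpha+\beta-1$; reducing modulo $p$ (by the equivalence established in the previous paragraph) yields $i'=(i+\alpha+\beta-1)\bmod p$. Finally, part c) is a one-line check: when $p=\alpha+\beta$, the shift $(i+\alpha+\beta-1)\bmod p$ reduces to $(i-1)\bmod p=i\ominus_p 1$, and $0\ominus_\beta 1=\beta-1$, so the formulas in a) and b) coincide.

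The only delicate step is the wrap-around in case a), where one must use the equivalence relation of the cycloid to bring $t_{i,1}$ back into the range of $stand$; the remaining steps are bookkeeping with mod-$p$ and mod-$\beta$ indices.
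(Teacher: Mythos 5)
Your proposal is correct and follows essentially the same route as the paper: derive $stand([t_i,a_j])=t_{i-j,-j}$ from the initial condition, observe that the output transition of a backward place is one step in the $\eta$-direction in the Petri space, and identify the resulting point with a regular-coordinate transition via Theorem~\ref{parameter} (your lattice-vector computation with $\mathbf{A}\binom{1}{0}=(\alpha,-\beta)$ is the same check as the paper's $\pi(-\alpha,\beta)=\frac{1}{A}(-A,0)$). Your explicit treatment of the $i=0$ wrap-around via $(p,0)\equiv(0,0)$ using $\beta\mid\delta$ is a small extra care the paper leaves implicit under ``modulo equivalent transitions,'' but it does not change the argument.
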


\begin{proof}
For a given $j$ by Definition \ref{d-reg-coord} we have $stand[t_0,a_j] := t_{-j,-j}$. Adding a value 
$i \in  \{ 0,\cdots,p-1 \} $ to the index of $t_0$ we obtain the index of 
$t_i$, hence $stand([t_i,a_j]) := t_{-j+i,-j}$.\\
a)
By the preceding result $stand([t_i,a_0]) := t_{i,0}$.
To prove a) we observe that in the fundamental parallelogram from $[t_i,a_0]) = {}^{\ndot}{[s'_i,a_0]}$ we should come to $[t_{(i+\beta+\alpha-1)\,mod\,p},a_{\beta-1}]$ by taking one step in the $\eta$-direction.
Therefore it is sufficient to prove $stand([t_i,a_0])  + (0,1) \equiv stand([t_{(i+\beta+\alpha-1)},a_{\beta-1}])$ or \\
$\begin{pmatrix} i \\ 0  \end{pmatrix} + \begin{pmatrix} 0\\ 1  \end{pmatrix} \equiv $ 
$\begin{pmatrix} i+\beta+\alpha-1-(\beta+1)\\ -(\beta-1)  \end{pmatrix}$ or 
$\begin{pmatrix} i \\ 1 \end{pmatrix}  \equiv \begin{pmatrix} i+\alpha\\ 1-\beta \end{pmatrix}$. \\
$\pi(\begin{pmatrix}i\\ 1  \end{pmatrix}-\begin{pmatrix}i+\alpha\\ 1-\beta  \end{pmatrix}) = 
\pi(\begin{pmatrix}-\alpha\\ \beta  \end{pmatrix}) = 
\frac{1}{A} \begin{pmatrix} \delta & -\gamma \\ \beta & \alpha \end{pmatrix}\begin{pmatrix}-\alpha \\ \beta  \end{pmatrix} = 
 \frac{1}{A} \begin{pmatrix} -A\\ 0 \end{pmatrix}  
  \in \Int \times \Int $. \\
  b) The same method results in the following equivalence to be proved:\\
  $stand([t_i,a_j])  + (0,1) \equiv stand([t_{(i-1)},a_{j-1}])$ or
$\begin{pmatrix} i-j \\ -j  \end{pmatrix} + \begin{pmatrix} 0\\ 1  \end{pmatrix} \equiv
\begin{pmatrix} i-1-(j-1)\\ 1-j  \end{pmatrix} $ which is obvious (without using Theorem \ref{parameter}).\\
c) If $p=\alpha+\beta$ then $(i+\beta+\alpha-1)\,mod\,p = (i-1)\,mod\,p$
\qed
\end{proof}


\begin{corollary} \label{regular-M0-reg-coord}
a) The regular initial marking of a regular cycloid system  \\
$\mathcal{C}(\alpha,\beta,\gamma,\delta,M_0)$ with  process length $p$ in regular coordinates is \\
$M_0 = \{[s_{p-1},a_0]\} \cup \{ [s_i,a_{i+1}] | 0\leq i < \beta-1  \}   \cup 
 \{ [s'_i,a_0] | p-\alpha\leq i < p\}$.\\
 For later reference, we note $ \prenbbw{[t_{\beta-1},a_{\beta-1}]} 
= [s'_{p-\alpha},a_0]$.\\
b) From the regular initial marking $M_0$ for each $k\in\{ 0,\cdots,p-1 \}$ the marking 
$M_k = \{[(s_{p-1+k)\; mod  \; p},a_0]\} \cup \{ [s_{i\oplus_p k},a_{i+1}] | 0\leq i < \beta-1  \}   \cup \\
 \{ [s'_{i\oplus_p k},a_0] | p-\alpha\leq i < p\}$ is reachable by $k\cdot\beta$ transition occurrences. 
 $M_k$ is called a $k$-\emph{regular} or simply \emph{regular} marking of the cycloid.
 \end{corollary}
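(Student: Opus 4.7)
The plan is to prove (a) by direct coordinate translation from Definition \ref{regular-M0} using Lemma \ref{le-reg-coord} and one equivalence check via Theorem \ref{parameter}, and to prove (b) by induction on $k$, exhibiting an explicit firing sequence of length $\beta$ that transports $M_k$ to $M_{k+1}$.

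For part (a), I handle forward and backward places separately. The forward set $\{\gsvw{-1,-j}\mid 0\le j<\beta\}$ translates directly: since $\postnbfw{t_{\xi,\eta}}=\gsvw{\xi,\eta}$ and $stand([t_i,a_j])=t_{i-j,-j}$ by Lemma \ref{le-reg-coord}, the equation $i-j=-1$ gives $[s_{p-1},a_0]$ for $j=0$ and $[s_i,a_{i+1}]$ with $i=j-1\in\{0,\ldots,\beta-2\}$ otherwise. The backward set $\{\gsrw{i,-\beta}\mid 0\le i<\alpha\}$ needs one equivalence: since $\mathbf{A}\left(\begin{smallmatrix}1\\0\end{smallmatrix}\right)=\left(\begin{smallmatrix}\alpha\\-\beta\end{smallmatrix}\right)$, Theorem \ref{parameter} yields $t_{i,-\beta}\equiv t_{i-\alpha,0}$, so with $j=0$ this place becomes $[s'_{(i-\alpha)\,\bmod\,p},a_0]$, and as $i$ varies in $\{0,\ldots,\alpha-1\}$ the index varies in $\{p-\alpha,\ldots,p-1\}$. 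The auxiliary identity $\prenbbw{[t_{\beta-1},a_{\beta-1}]}=[s'_{p-\alpha},a_0]$ is then read off Lemma \ref{le-reg-coord}a at $i=p-\alpha$: $(p-\alpha+\beta+\alpha-1)\,\bmod\,p=(p+\beta-1)\,\bmod\,p=\beta-1$.

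For (b), the base case $k=0$ is the initial marking itself. In the induction step, I claim that from $M_k$ the firing sequence $[t_{(\beta-1+k)\,\bmod\,p},a_{\beta-1}],[t_{(\beta-2+k)\,\bmod\,p},a_{\beta-2}],\ldots,[t_{k\,\bmod\,p},a_0]$ is enabled stage by stage and produces $M_{k+1}$. At stage $j$ the transition $[t_{(j+k)\,\bmod\,p},a_j]$ has forward input $[s_{(j-1+k)\,\bmod\,p},a_j]$, which lies in $M_k$ by its defining formula (the $j=0$ case coinciding with the distinguished $a_0$-term). Its backward input is, by Lemma \ref{le-reg-coord}, the place $[s'_{(k-\alpha)\,\bmod\,p},a_0]$ when $j=\beta-1$, and $[s'_{(j+k+1)\,\bmod\,p},a_{j+1}]$ when $j<\beta-1$. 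The former is present in $M_k$ since $(k-\alpha)\,\bmod\,p=(p-\alpha+k)\,\bmod\,p$ is the first index in the $M_k$-list of backward tokens; the latter was just deposited by the preceding firing of $[t_{(j+1+k)\,\bmod\,p},a_{j+1}]$, whose backward output $\postnbbw{[t_{(j+1+k)\,\bmod\,p},a_{j+1}]}=[s'_{(j+1+k)\,\bmod\,p},a_{j+1}]$ is exactly this place.

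Tracking tokens to the end of the cascade, each $a_j$-forward token advances by one, yielding the forward part of $M_{k+1}$. On the backward side only one $a_0$-token is consumed, namely $[s'_{(p-\alpha+k)\,\bmod\,p},a_0]$, and only one $a_0$-token is produced, namely $[s'_{k\,\bmod\,p},a_0]$ (by the final firing $[t_{k\,\bmod\,p},a_0]$); every intermediate backward token on process $a_{j+1}$ is produced and immediately reconsumed by the next firing. Therefore the $a_0$-index set shifts from $\{(p-\alpha+k)\,\bmod\,p,\ldots,(p-1+k)\,\bmod\,p\}$ to $\{(p-\alpha+k+1)\,\bmod\,p,\ldots,(p-1+k+1)\,\bmod\,p\}$ in exactly $\beta$ firings, matching $M_{k+1}$. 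The main obstacle is the $a_0\leftrightarrow a_{\beta-1}$ wrap-around in Lemma \ref{le-reg-coord}a; firing in decreasing $j$ converts it into a clean chain whose intermediate backward deposits exactly satisfy the next enabledness requirement.
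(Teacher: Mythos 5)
Your proof is correct; part (b) follows essentially the paper's own route, while the backward half of part (a) is done by a genuinely different argument. For (a), the paper treats only the forward places by coordinate translation (applying $stand$ to the input transitions and invoking its injectivity); for the backward places it argues behaviourally, saying that $[t_{\beta-1},a_{\beta-1}]$ must be enabled and that $a_{\beta-1}$ must be able to make $\alpha$ steps, which determines $[s'_{p-\alpha},a_0],\ldots,[s'_{p-1},a_0]$. You instead translate the set $\{\gsrw{i,-\beta}\mid 0\le i<\alpha\}$ of Definition~\ref{regular-M0} directly, using $t_{i,-\beta}\equiv t_{i-\alpha,0}$ (Theorem~\ref{parameter} with $m=1$, $n=0$) and the length-$p$ wrap of the $a_0$-process to land on $[s'_{(i-\alpha)\bmod p},a_0]$ with index in $\{p-\alpha,\ldots,p-1\}$. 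This buys a purely algebraic derivation from the definition of the regular marking, at the small cost of needing two facts you use tacitly: that the $a_0$-cycle has length exactly $p$ (Theorem~\ref{th-f-b-cycle} with $\beta\mid\delta$) and that $p>\alpha$ (which the paper proves inline via $p=\alpha\frac{\delta}{\beta}+\gamma$); the paper's behavioural reading buys intuition but presupposes properties of the marking rather than deriving them. For (b) both proofs fire the same cascade $[t_{\cdot},a_{\beta-1}],\ldots,[t_{\cdot},a_0]$; your version adds what the paper leaves implicit, namely the stage-by-stage enabledness check via Lemma~\ref{le-reg-coord} (the $j=\beta-1$ firing consumes the oldest $a_0$-backward token $[s'_{(p-\alpha+k)\bmod p},a_0]$, each later firing consumes exactly the backward token just deposited), and the final index bookkeeping showing the $a_0$-backward window shifts by one in exactly $\beta$ firings — a useful tightening, since the paper's index notation in this step is slightly off (it writes the last transition as $[t_{i\ominus_p\beta},a_0]$ where $i\ominus_p(\beta-1)$ is meant).
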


\begin{proof}
a) Since $p-1 = (-1) \; mod \; p$ we can write $ \{ [s_i,a_{i+1}] | -1\leq i < \beta-1  \} $ for the forward places of $M_0$.
As the mapping $stand$ is  defined on transitions, we go to the input transitions and apply $stand$ to obtain
$ \{ [t_i,a_{i+1}] | -1\leq i < \beta-1  \} $ and
$  \{ stand([t_i,a_{i+1}] ) | -1\leq i < \beta-1  \} 
=  \{ t_{-1,-(i+1)} | -1\leq i < \beta-1  \} 
=  \{ t_{-1,i} |\; 0 \geq i > -\beta\} \}$, 
which is the same as the set of input transitions of the forward places in 
Definition \ref{regular-M0}.\\
Since the mapping $stand$ is injective we can conclude also in the inverse direction.
%
%
To prove the second part of the union recall that the last  traffic item $a_{\beta-1}$ in a regular initial marking is active (enabled). Therefore also the backward input place  $[s'_i,a_0]$ of $[t_{\beta-1},a_{\beta-1}]$ must be marked. Using Lemma \ref{le-reg-coord}  the value of $i$ must satisfy $ [s'_i,a_0]{}^{\ndot} =$ 
$[t_{(i+\alpha+\beta-1)\,mod\,p},a_{\beta-1}] = [t_\beta,a_\beta]$, and for $i$ we obtain the following condition: $(i+\alpha+\beta-1)\,mod\,p = \beta-1$ and $i = ( - \alpha) \, mod \, p = p - \alpha $. 
This holds since $\beta|\delta \; \Rightarrow \;\beta \leq \delta$ and therefore $p = \frac{A}{\beta} = \alpha\frac{\delta}{\beta}+\gamma > \alpha$.
The marked place in question is therefore
$[s'_i,a_0] = [s'_{ p - \alpha },a_0]$. 
To determine the other elements of $ \{ [s'_i,a_0] | p-\alpha\leq i \leq p  \}$ recall that the last  traffic item $a_{\beta-1}$ should be
 able to make $\alpha$ steps before any other transition has to occur. Therefore also the places $[s'_{ p - \alpha + 1},a_0]  $ to $[s'_{ p - \alpha + \alpha-1},a_0]  $
 must be marked in the regular initial marking.\\
 b) In the marking $M_k$ transitions $[t_i,a_{\beta-1}],[t_{i\ominus_p 1 },a_{\beta-2}],\cdots,[t_{i\ominus_p \beta },a_{0}]$ for some $0 \leq i < p$ can occur and the indices of the forward markings are increased by $1$.
The token in the first place $ [s'_{(p-\alpha + k) \; mod  \; p},a_0] $ in the backward places of $M_k$ is removed by $[t_i,a_{\beta-1}]$. The last transition $[t_{i\ominus_p \beta },a_{0}]$ adds a token to 
$[s_{(p-1+k+1)\; mod  \; p},a_0]$ and therefore also to $[s'_{(p-1+k+1)\; mod  \; p},a_0]$.
Therefore the set $\{ [s'_{i\oplus_p k},a_0] | p-\alpha\leq i < p\}$ of $M_k$ is transformed to the corresponding set $\{ [s'_{(i + k +1)\; mod  \; p},a_0] | p-\alpha\leq i < p\}$ of $M_{k+1}$.
 \qed
\end{proof}

 In the regular cycloid system $\mathcal{C}(4,3,3,3,M_0)$ in Figure \ref{c-4-3-3-3} b) we obtain $ \prenbbw{[t_{2},a_{2}]}\\
= [s'_{p-\alpha},a_0]  = [s'_{7-4},a_0]$. 
The given regular initial  marking is 
 $\{[s_6,a_0],[s_0,a_1],$ $[s_1,a_2],$ $[s'_3,a_0],[s'_4,a_0],
 [s'_5,a_0],[s'_6,a_0]\}$.  The standard initial marking is  given by bold bordered circles.

\begin{lemma} \label{bw-input-place}
In a regular cycloid for $0 \leq j < p$ :
\begin{itemize}
       \item [a)] $\prenbbw{[t_{i},a_{\beta - 1}]} = [s'_{i \ominus_p  (n-1)},a_0]$ and 
                      $\prenbbw{[t_{i},a_{\beta - 1}]} = [s'_{i \oplus_n 1)},a_0]$ for $p = n$.
       \item [b)] $\prenbbw{[t_{i},a_{j}]} = [s'_{i \oplus_p 1},a_{j+1}]$ for $0  \leq j < \beta -1$.
\end{itemize}     
\end{lemma}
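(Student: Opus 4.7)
The plan is to invert the output-transition formulas of Lemma~\ref{le-reg-coord}. In a cycloid each transition has exactly one backward input place, so it suffices, for each of the two cases, to solve the equation $[s'_k, a_\ell]^{\ndot} = [t_i, a_j]$ for the index $k$, with $\ell$ chosen to fit the appropriate case of Lemma~\ref{le-reg-coord}; the unique solution $[s'_k, a_\ell]$ is then the required backward input place $\prenbbw{[t_i, a_j]}$.

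For part b) I look for $k$ satisfying $[s'_k, a_{j+1}]^{\ndot} = [t_i, a_j]$, where $0 \leq j < \beta - 1$. Since $j+1 > 0$, Lemma~\ref{le-reg-coord} b) applies and gives $[s'_k, a_{j+1}]^{\ndot} = [t_{k \ominus_p 1}, a_{(j+1)\ominus_\beta 1}] = [t_{k \ominus_p 1}, a_j]$. Matching the first index yields $k \ominus_p 1 = i$, i.e.\ $k = i \oplus_p 1$, which is the asserted identity.

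For part a) I look for $k$ satisfying $[s'_k, a_0]^{\ndot} = [t_i, a_{\beta-1}]$. By Lemma~\ref{le-reg-coord} a) the left-hand side equals $[t_{(k+\alpha+\beta-1)\,mod\,p}, a_{\beta-1}]$, and solving $(k+\alpha+\beta-1)\,mod\,p = i$ gives $k = (i-(\alpha+\beta-1))\,mod\,p = i \ominus_p (n-1)$ with $n := \alpha+\beta$. In the distinguished case $p = n$ the relation $-(n-1) \equiv 1 \pmod{p}$ converts this into $i \oplus_p 1 = i \oplus_n 1$, in agreement with Lemma~\ref{le-reg-coord} c) (where the two cases of that lemma were already shown to coincide).

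I do not expect a real obstacle: the statement is essentially the algebraic inversion of a bijection between backward places and their output transitions, so each case collapses to a one-line modular computation. The only slightly delicate point is recognising that the parameter $n$ in part a) must be read as $\alpha+\beta$; once this is fixed, uniqueness of the preimage is automatic because the underlying Petri-space structure assigns every transition exactly one backward input place, so the index $k$ produced by inverting the formula is necessarily the correct one.
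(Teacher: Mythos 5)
Your proposal is correct and follows essentially the same route as the paper's own proof: both parts are obtained by inverting the output-transition formulas of Lemma~\ref{le-reg-coord} and solving the resulting modular equation for the place index, including the simplification $i \ominus_p (n-1) = i \oplus_n 1$ when $p = n$. Your added remark on uniqueness (each transition having exactly one backward input place) is implicit in the paper but harmless to make explicit.
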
 
\begin{proof} 
a) By Lemma \ref{le-reg-coord} a)$ [s'_k,a_0]{}^{\ndot} = [t_{(k+n-1)\; mod  \; p},a_{\beta-1}]$, hence
$\prenbbw{[t_{i},a_{\beta - 1}]} = [s'_k,a_0] $ with $i = (k+n-1)\; mod  \; p$ and $k =(i-n+1)\; mod \; p = i \ominus_p  (n-1) $. 
If $p = n$ then $k =(i-n+1)\; mod \; n = ((i+1)-n)\; mod \; n = i \oplus_n 1 $.\\
b) By Lemma \ref{le-reg-coord} b) $ [s'_k,a_j]{}^{\ndot} = [t_{k \ominus_p  1},a_{j\ominus_\beta  1}]$ for $0 < j < \beta$. It follows 
$\prenbbw{[t_{i},a_{j}]} = [s'_{k},a_{j+1}]$ with $i = k \ominus_p  1$, hence $k = i \oplus_p 1$. 
\qed\end{proof}

\begin{lemma} \label{le-reg-cyc}
Let be $\mathcal{C}(\alpha,\beta,\gamma,\delta)$ a regular cycloid with process length $p$ and  minimal cycle length $cyc$. Then
  \begin{itemize}
       \item [a)] $cyc = p$  if $\alpha \leq \beta$,
       \item [b)]   $cyc = \frac{\beta}{\alpha}\cdot p$ if $\alpha > \beta$ and $\alpha | p$,
       \item [c)] $cyc = 2 \cdot \beta $  if $\alpha > \beta = \gamma = \delta$. 
  \end{itemize}     
\end{lemma}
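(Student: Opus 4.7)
The plan is to reduce the minimal cycle problem to an integer linear optimization via Theorem~\ref{parameter}. By Lemma~\ref{normal form}, I may assume the cycle passes through the origin $t_{0,0}$. In the Petri space each edge out of a transition leads, through a forward or backward place, to either $t_{\xi+1,\eta}$ or $t_{\xi,\eta+1}$, so a cycle of length $N$ through $t_{0,0}$ is just a sequence of $m$ forward and $n$ backward transitions with $m+n = N$ whose endpoint $t_{m,n}$ is equivalent to $t_{0,0}$. By Theorem~\ref{parameter} this congruence amounts to
\[
\begin{pmatrix} m \\ n \end{pmatrix} = \mathbf{A}\begin{pmatrix} k \\ l \end{pmatrix} = \begin{pmatrix} k\alpha + l\gamma \\ -k\beta + l\delta \end{pmatrix}
\qquad\text{for some } k,l\in\Int,
\]
so that $m+n = k(\alpha-\beta) + l(\gamma+\delta)$. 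The task becomes: minimize $k(\alpha-\beta)+l(\gamma+\delta)$ over $k,l\in\Int$ subject to $m\geq 0$, $n\geq 0$, $(m,n)\neq (0,0)$. Regularity gives $q := \delta/\beta \in \Natp$ and $p = q\alpha + \gamma$; a short sign check rules out $l \leq 0$ (the two inequalities then force $(m,n)=(0,0)$), so I restrict to $l \geq 1$.

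In case a), the coefficient $\alpha-\beta$ is non-positive, hence the objective is decreasing in $k$, and I push $k$ up to the binding integer value $k = lq$ coming from $n\geq 0$; this yields $(m,n) = (lp,0)$ and cycle length $lp$, minimised over $l\geq 1$ by $cyc = p$. In case b) the hypothesis $\alpha\mid p$ together with $p = q\alpha+\gamma$ forces $\alpha\mid\gamma$; writing $\gamma = r\alpha$ and exploiting $\alpha-\beta > 0$, I instead push $k$ down to the binding integer value $k = -lr$ coming from $m \geq 0$. This gives $(m,n) = (0,\,l\beta(r+q)) = (0,\,l\beta p/\alpha)$, whose minimum at $l = 1$ is $cyc = \beta p/\alpha$.

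Case c), with $\gamma = \delta = \beta < \alpha$, falls outside case b) because $\alpha\nmid\beta$; here $p = \alpha + \beta$ and the objective simplifies to $k(\alpha-\beta)+2l\beta$ with $n = (l-k)\beta$. The pair $(k,l) = (0,1)$ is feasible and realises $(m,n) = (\beta,\beta)$, giving $cyc \leq 2\beta$. For the matching lower bound I would split on the sign of $k$: if $k\geq 0$ then $m+n \geq 2l\beta \geq 2\beta$; if $k = -k' < 0$ then $m \geq 0$ forces $l\beta \geq k'\alpha$, so $m+n \geq -k'(\alpha-\beta) + 2k'\alpha = k'(\alpha+\beta) > 2\beta$. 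The main obstacle I anticipate is exactly this last argument: in cases a) and b) one of the two sign constraints is binding throughout and a single integer rounding fixes the optimum, whereas in c) both constraints may bind depending on the sign of $k$, forcing the extra case distinction.
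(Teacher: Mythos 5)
Your argument is correct, and it follows a genuinely different route from the paper. You re-derive the minimal cycle length from first principles: using Lemma~\ref{normal form} to anchor the cycle at the origin, you lift it to a monotone staircase in the Petri space ending at a point equivalent to $t_{0,0}$, characterize equivalence via Theorem~\ref{parameter} as $(m,n)^{T}=\mathbf{A}(k,l)^{T}$, and then minimize $m+n=k(\alpha-\beta)+l(\gamma+\delta)$ over integers subject to $m,n\geq 0$; your sign analysis ($l\geq 1$, then pushing $k$ to the binding constraint in a) and b), and the two-sided case split in c)) is sound, and the divisibility hypotheses ($\beta\mid\delta$, and $\alpha\mid\gamma$ in case b)) are exactly what make the binding values of $k$ integral. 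The paper instead imports the closed-form minimal-cycle-length formula of Theorem~7 from \cite{valk-arXiv-algebra-2024} and merely evaluates it: case a) and c) are direct substitutions $\delta=m\beta$ (resp.\ $\gamma=\delta=\beta$), while case b) is reduced to case a) by passing to the dual cycloid $\mathcal{C}(\beta,\alpha,\delta,\gamma)$ via Theorem~\ref{symmetry}~a), which is regular precisely because $\alpha\mid p$ forces $\alpha\mid\gamma$. What your approach buys is self-containedness (no reliance on the external formula) and a uniform treatment of all three cases, including b) handled directly rather than by duality; what the paper's approach buys is brevity and an illustration of the duality symmetry. One small point you gloss over, which you may want to state explicitly: the optimum $N^{*}$ of your integer program equals the minimal cycle length because any feasible $(m,n)$ yields a closed walk of length $m+n$ through the origin (hence contains a cycle of at most that length), while conversely a minimal cycle through the origin (Lemma~\ref{normal form}) yields a feasible $(m,n)$ with $m+n=cyc$.
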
 
\begin{proof}
a) By Theorem 7 from \cite{valk-arXiv-algebra-2024} and $\delta=m\cdot \beta$ for some $m \in \Int$ we obtain \\
$cyc = \gamma + \delta +  \lfloor  \frac{\delta}{\beta}\rfloor (\alpha - \beta) =
\gamma + m\cdot \beta +  \lfloor  \frac{m\cdot \beta}{\beta}\rfloor  (\alpha - \beta)  =  
\gamma + m\cdot \beta +  m  (\alpha - \beta)  =  
\gamma + m\cdot \alpha$. 
This term equals 
$p = 
\frac{A}{\beta} = 
\frac{1}{\beta}  (\alpha\delta+\beta\gamma) = 
\frac{1}{\beta}  (\alpha\cdot m\cdot \beta+\beta\gamma)  =   \alpha \cdot m + \gamma$.

b) If $\alpha > \beta$ and 
$\alpha | p  =
\frac{A}{\beta} = 
\frac{1}{\beta}  (\alpha\delta+\beta\gamma) =
\alpha \cdot \frac{\delta}{\beta} + \gamma 
$
then $\alpha | \gamma$ and
case a) applies to the dual cycloid $\mathcal{C}(\beta,\alpha,\delta, \gamma)$ (Definition \ref{symmetry}) which is regular since $\alpha | \gamma$. Hence $cyc = p'$ where $p' = \frac{A}{\alpha} = \frac{\beta\cdot p}{\alpha}$ is the process length of the dual cycloid. Since it  is isomorphic to $\mathcal{C}(\alpha,\beta,\gamma,\delta)$ by Theorem \ref{symmetry} it has the same value $cyc = p' = \frac{\beta}{\alpha}\cdot p$.

c) If $\alpha > \beta = \gamma = \delta$ then $cyc = \gamma + \delta - \lfloor  \frac{\gamma}{\alpha}\rfloor  (\alpha - \beta) = \beta + \beta - 0 \cdot (\alpha - \beta)$. If $\alpha = \beta = \gamma = \delta$ then $cyc = \gamma + \delta + \lfloor  \frac{\delta}{\beta}\rfloor  (\alpha - \beta) = \beta + \beta + 1 \cdot 0$.
\qed
\end{proof}

The regular cycloid $\mathcal{C}(4,3,3,6)$ does not satisfy any of the conditions of Lemma \ref{le-reg-cyc}. The parameters in question are $cyc = 9$, 
$p= 11$ and $ \frac{\beta}{\alpha}\cdot p = \frac{3}{4}\cdot 11$.


\vspace{-0.4  cm}
\section{Stop-resilient Cycloids}\label{sec-resilient}

Considered as cooperating processes cycloids perform a  strong synchronization regimen. Therefore it is surprising that by a small extension we can model these processes to be stoppable or failing without stopping the other processes. 
As a byproduct it is proved that by eliminating the car with the highest index $c-1 =\beta-1$ the cycloid for one car less and one gap more is obtained. This corresponds to a step from a circular traffic queue with gaps $tq_1(c,g)$
to the model $tq_1(c-1,g+1)$.
Since the extension is defined by a folding of the backward places  only, the forward places and transitions of the processes are not modified and the cycloid algebra 
\cite{valk-arXiv-algebra-2024} can be applied.
To obtain a live system when one car is stopped we require that at least $\beta>1 $ cars are present.


\begin{definition} \label{de-bf-folding} For a given regular cycloid system $ \mathcal{C} =\mathcal{C}( \alpha, \beta, \gamma, \delta, M_0 ) $ with $\beta >1$, process length $p$ and a fixed set $D \subseteq\{0,\cdots,\beta-1\} $ with $|D| >1$,
called the set of \emph{back indices},
 we define the \emph{ backward folding}  
$ \mathcal{C}_{bf(D)}( \alpha, \beta, \gamma, \delta, \eqcl[D]{M_0} ) $ 
by a relation $\equiv_{bf(D)}$  on the backward places 
$\postnbfw{[t_{i},a_{j}]} = [s'_{i},a_{j}]$ by
\begin{equation}\label{def-bf}
[s'_{i},a_{j}]  \;\;\; \equiv_{bf(D)}   \;\; [s'_{r},a_{s}] \; \;\Leftrightarrow\; i = r \;\;\land \;\;  \{ j,s  \} \subseteq D \;\;\;\; \text{for} \;\;\;\; 0  \leq i,r <p 
 \end{equation}
The folding is extended to  markings  by $\eqcl[D]{M} :=   
\{ \eqcl[bf(D)]{s}| s \in M \} $\footnote{We will show that in each reachable marking of the cycloids under investigation each class contains at most one token.}. 
If $D = \{0,\cdots,\beta-1\} $ the folding and the equivalence relation are called \emph{total} and denoted by 
$ \mathcal{C}_{bf}( \alpha, \beta, \gamma, \delta, \eqcl[]{M_0} ) $ and $\equiv_{bf}$, respectively. 

\end{definition}

The folding is defined on backward places modelling the channels of the cooperating processes. Therefore by the folding we switch from a message oriented synchronization to a shared variable synchronization mechanism.
If $j \in D$ then the process $a_{j\oplus_c 1}$ is sharing its backward input places.

  \begin{lemma} \label{eq-classes}
\begin{itemize}
       \item [a)] For $0 \leq i <p$  the class of $[s'_i,a_0]$ with respect to 
	$\equiv_{bf}$  is \\
       $S^{bf}_i :=\{[s'_i,a_0]\} \cup \{[s'_{i\oplus_p n},a_j]| 0 < j < \beta\}$ with $n= \alpha+\beta$.
       When used as an element we write $\eqcl[bf]{S_i}$ for this class.
       In particular, for $p=n$ this reduces to 
       $S^{bf}_i  = \{[s'_{i},a_j]| 0  \leq j < \beta\}$.
       In this case we obtain \\
       ${}^{\ndot}{\eqcl[bf]{S_i}} = \{ [t_i,a_j]|0  \leq j < c \}$
       and ${\eqcl[bf]{S_i}}^{\ndot} = \{ [t_{i\ominus_n 1},a_j]|0  \leq j < c \}$.
       \item [b)]  All classes of the relation $ \equiv_{bf(D)}$ contain only one element, with the exception of 
$S^D_i :=\begin{cases}
                              \{[s'_i,a_0]\} \cup \{[s'_{i\oplus_p n},a_j]| j \in D \setminus  \{ 0 \}  \}&  \, \text{if}  \, \, 0 \in D \\
                              \{[s'_{i\oplus_p n},a_j]| j \in D  \}&  \, \text{if}  \, \, 0 \notin D  
\end{cases}$
for $\;\; 0  \leq i < p$.\\
By $\eqcl[]{S^D_i}$ we denote the class with elements from $S^D_i$ with respect to $ \equiv_{bf(D)}$.
\end{itemize}     
\end{lemma}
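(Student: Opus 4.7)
The plan is to derive the classes of $\equiv_{bf(D)}$ directly from Equation~(\ref{def-bf}) by combining the folding relation with the cycloid equivalence expressed in regular coordinates. The key technical tools are the standard-coordinate translation $stand$ of Definition~\ref{d-reg-coord} and Lemma~\ref{le-reg-coord}, together with the regularity hypothesis $\beta \mid \delta$ which makes the inter-process alignment uniform.

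For part a), I will first focus on the class of $[s'_i,a_0]$ with $0 \leq i < p$ under the total folding $\equiv_{bf}$. By Equation~(\ref{def-bf}) a second label $[s'_r,a_j]$ lies in this class whenever the first indices match in the sense of Petri-space equivalence. Translating to standard coordinates, $[s'_i,a_0]$ becomes $\gsrw{i,0}$ and $[s'_r,a_j]$ becomes $\gsrw{r-j,-j}$; applying Theorem~\ref{parameter} together with $\beta \mid \delta$ I will show that, for fixed $0 < j < \beta$, equivalence of these two points requires $r = i \oplus_p n$ with $n = \alpha + \beta$. Collecting the resulting labels gives $S^{bf}_i = \{[s'_i,a_0]\} \cup \{[s'_{i \oplus_p n},a_j] \mid 0 < j < \beta\}$, which collapses to $\{[s'_i,a_j] \mid 0 \leq j < \beta\}$ precisely when $p = n$.

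The computation of the input and output transitions of $\eqcl[bf]{S_i}$ in the special case $p = n$ then follows by enumeration: each member $[s'_i,a_j]$ has unique input $[t_i,a_j]$ by Definition~\ref{d-reg-coord}, and by Lemma~\ref{le-reg-coord} its output transition is $[t_{i \ominus_n 1},a_{\beta-1}]$ for $j=0$ and $[t_{i \ominus_n 1},a_{j-1}]$ for $j>0$; taking the union over $j$ yields the two sets claimed in part~a). Part~b) will then follow by restricting the same argument to pairs whose second indices lie in $D$: a backward place $[s'_i,a_j]$ with $j \notin D$ is not identified with any other under $\equiv_{bf(D)}$ and forms a singleton class, while the remaining classes split into the two cases depending on whether $0 \in D$, reproducing the two branches of the definition of $S^D_i$.

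The main obstacle is the algebraic identification $[s'_i,a_0] \equiv [s'_{i \oplus_p n},a_j]$ in Step~1 of part~a). Concretely, via Theorem~\ref{parameter} it amounts to showing that the parameter vector of the difference $\gsrw{i,0} - \gsrw{(i \oplus_p n) - j, -j}$ is integral for every $j$ with $0 < j < \beta$; I expect this to reduce to an explicit combination of the cycloid lattice generators $(\alpha,-\beta)$ and $(\gamma,\delta)$, and to use the regularity $\beta \mid \delta$ essentially in producing an integer first-index shift modulo $p$.
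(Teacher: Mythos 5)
Your plan for part~a) hinges on a step that is not just unproven but false. You propose to establish $[s'_i,a_0]\equiv[s'_{i\oplus_p n},a_j]$ ($0<j<\beta$) as a Petri-space/cycloid equivalence via Theorem~\ref{parameter}, i.e.\ to show that the parameter vector of the difference of the standard coordinates $\gsrw{i,0}$ and $\gsrw{(i\oplus_p n)-j,\,-j}$ is integral. It cannot be: regular coordinates label the $A=\beta p$ transitions (and hence their backward output places) of a regular cycloid bijectively, since $stand$ is injective by Lemma~\ref{le-reg-coord}, so $[s'_i,a_0]$ and $[s'_{i\oplus_p n},a_j]$ are \emph{distinct} places of $\mathcal{C}(\alpha,\beta,\gamma,\delta)$; any pair of labels whose standard coordinates were lattice-equivalent would already have been fused when the cycloid was formed. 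Concretely, writing $r=i+n$, the coordinate difference is $(j-n,\,j)$ up to an integral multiple of $(p,0)$ (which lies in the lattice because $\beta\mid\delta$), and the second component of $\pi\bigl((j-n,\,j)\bigr)$ equals $\tfrac{n(j-\beta)}{\beta p}$, which for $0<j<\beta$ and $n\le p$ lies strictly between $-1$ and $0$, hence is never an integer; in general integrality is excluded by the bijectivity argument. The point you are missing is that $\equiv_{bf(D)}$ is an \emph{additional} quotient imposed by Definition~\ref{de-bf-folding}: it merges backward places because they are the backward input places of transitions occupying the same step index across the processes in $D$, not because of any relation in the cycloid lattice, so Theorem~\ref{parameter} and the divisibility $\beta\mid\delta$ play no role in identifying the classes.

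The shift by $n=\alpha+\beta$ in $S^{bf}_i$ is an artifact of the naming convention (backward places are named after their \emph{input} transitions, while the folding is anchored at their \emph{output} transitions), and the correct argument is pure bookkeeping, as in the paper: by Lemma~\ref{le-reg-coord}~a) the output transition of $[s'_i,a_0]$ is $[t_{(i+n-1)\,mod\,p},a_{\beta-1}]$; by Definition~\ref{de-bf-folding} the class additionally contains $\prenbbw{[t_{(i+n-1)\,mod\,p},a_{j}]}$ for the remaining indices $j$; and by Lemma~\ref{bw-input-place}~b) these are exactly $[s'_{i\oplus_p n},a_j]$, $0<j<\beta$. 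Your handling of the rest is fine: the reduction for $p=n$, the enumeration of ${}^{\ndot}{\eqcl[bf]{S_i}}$ and ${\eqcl[bf]{S_i}}^{\ndot}$, and the treatment of part~b) by observing that $\equiv_{bf(D)}$ is finer than $\equiv_{bf}$ (singletons for $j\notin D$, and the two cases according to whether $0\in D$) all match the paper. But as written, the central identification step of part~a) would fail if you tried to carry it out.
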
 

\begin{proof} 
 a) $[s'_i,a_0] \in S^{bf}_i$ by definition. The output transition of  $[s'_i,a_0]$ is by Lemma \ref{le-reg-coord} 
 $[s'_i,a_0]{}^{\ndot} = [t_{(i+n-1)\; mod  \; p},a_{\beta-1}]$. By Definition  \ref{de-bf-folding} the remaining elements of 
 $S^{bf}_i $ are $\prenbbw{[t_{(i+n-1)\; mod  \; p},a_{j }]} $ for $0 < j <\beta $. By Lemma  \ref{bw-input-place} b) we obtain 
 for second set of  $S^{bf}_i$ in this lemma : 
$ \{    \prenbbw{  [t_{(i+n-1)\; mod  \; p},a_{j } ] }| 0  \leq j < \beta -1 \} $
 $ = \{[s'_{(i+n-1+1)\; mod  \; p},a_{j+1 }]| 0  \leq j < \beta -1 \} $
 $ = \{[s'_{(i\oplus_p  n)},a_{j }]| 0  < j < \beta  \}$. From this follows also the particular case for $p=n $.\\
 b) This follows directly from a) as $ \equiv_{bf(D)}$ is equal or finer than $ \equiv_{bf}$.
 \qed\end{proof} 

\begin{figure}[htbp]
 \begin{center}
        \includegraphics [scale = 0.3]{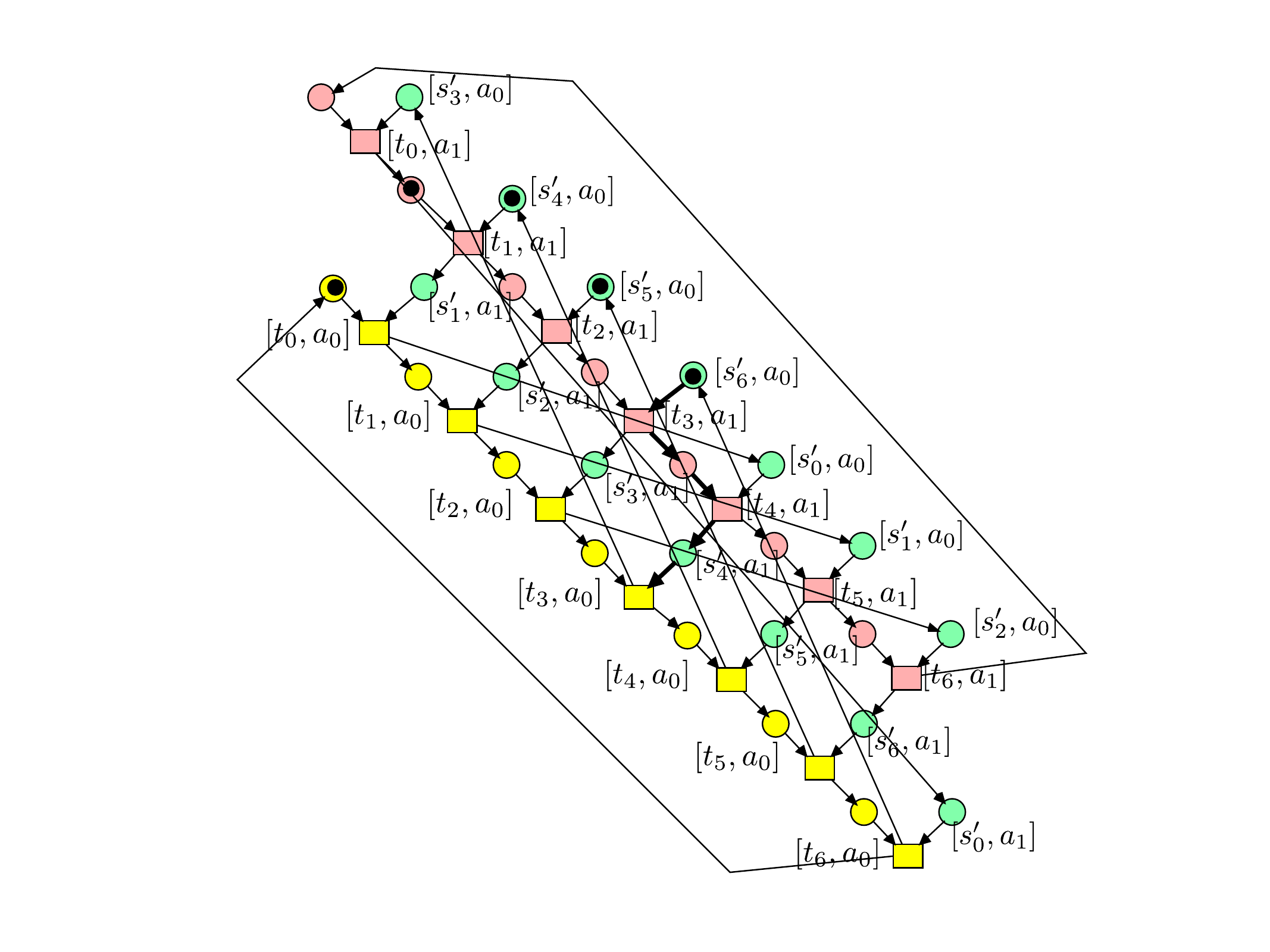}
        \caption{The regular cycloid system  $ \mathcal{C}( 3,2,1,4,M_0 ). $ }
        \label{3214}
      \end{center}
\end{figure}

For the cycloid  $ \mathcal{C}( 3,2,1,4 ) $ from Fig. \ref{3214} we obtain $A = 14, p =  \frac{A}{\beta} = 7, n = \alpha + \beta = 5$ and 
 $S^{bf}_0 = \{[s'_0,a_0],[s'_5,a_1]\}$,   $S^{bf}_1:= \{[s'_1,a_0],[s'_6,a_1]\} , \cdots ,  S^{bf}_6 := \{[s'_6,a_0],[s'_4,a_1]\}$.                      
For the cycloid  $ \mathcal{C}( 4,3,3,3) $ from Figure \ref{c-4-3-3-3}b) we obtain $A = 21, p = n = 7$ and 
 $S^{bf}_0 = \{[s'_0,a_0],[s'_0,a_1],$
 $[s'_0,a_2]\}$,   $S^{bf}_1 := \{[s'_1,a_0],[s'_1,a_1],$
 $[s'_1,a_2]\} , \cdots ,  S^{bf}_6 = \{[s'_6,a_0],[s'_6,a_1],[s'_6,a_2]\}$.   
For the definition of place invariants the following Lemma will be used.

\begin{lemma} \label{bf-path}
The places of a class $S^{bf}_i$ of a backward folding are included in a path, called \emph{bf-path} or 
$i$-\emph{bf-path},
as  follows: 
 $[s'_i,a_0], [t_{(i+n-1)\; mod  \; p},a_{\beta - 1}], 
 \cdots ,$ \\
$ [t_{(i+n-1)\; mod  \; p},a_{j}], [s_{(i+n-1)\; mod  \; p},a_{j}]^*,$
 $[t_{(i+n)\; mod  \; p},a_{j}], [s'_{(i+n)\; mod  \; p},a_{j}],$\\
 $[t_{(i+n-1)\; mod  \; p},a_{j-1}],$
 $\cdots $,
$ [t_{(i+n-1)\; mod  \; p},a_{0}]$.
 A $i$-\emph{bf-path} and a $a_j$-process share the place $ [s_{(i+n-1)\; mod  \; p},a_{j}]$ 
 $(0  \leq i < p,\; 1  \leq j < \beta)$.
For $p-n+1  \leq i < p-\alpha$ which is $i = p-n+1 +k \;\; (0 \leq k < \beta-1)$ the bf-path starting in $[s'_i,a_0] $ contains exactly one token in the regular initial marking.
\end{lemma}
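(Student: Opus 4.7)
The plan is to prove the two assertions separately: first the path structure, then the token-count claim.

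For the path structure, I would trace the listed sequence arc by arc and verify each connection from the input/output formulas already established. Abbreviating $m = (i + n - 1) \bmod p$, the first arc $[s'_i, a_0] \to [t_m, a_{\beta-1}]$ is immediate from Lemma \ref{le-reg-coord}(a). The internal pattern then repeats for $j = \beta-1, \beta-2, \ldots, 1$: starting from $[t_m, a_j]$, the path walks forward via the forward-output place $[s_m, a_j]$ to $[t_{m \oplus_p 1}, a_j] = [t_{m+1}, a_j]$ (since by Definition \ref{d-reg-coord} the output transition of $[s_\ell, a_j]$ is $[t_{\ell \oplus_p 1}, a_j]$), then via the backward-output place $[s'_{m+1}, a_j]$ back to $[t_m, a_{j-1}]$, using Lemma \ref{bw-input-place}(b) which gives $\prenbbw{[t_m, a_{j-1}]} = [s'_{m \oplus_p 1}, a_j]$. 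The terminating arc from $[s'_{m+1}, a_1]$ into $[t_m, a_0]$ is again Lemma \ref{bw-input-place}(b) taken at $j = 0$. Collecting the visited backward places, namely $[s'_i, a_0]$ together with $[s'_{(i+n) \bmod p}, a_j]$ for $1 \leq j < \beta$, recovers exactly the set $S^{bf}_i$ described in Lemma \ref{eq-classes}(a); the visited forward places $[s_m, a_j]$ for $1 \leq j < \beta$ are then the claimed shared places between the $i$-bf-path and the $a_j$-processes.

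For the token count, I would first evaluate $m$ under the hypothesis. Writing $i = p - n + 1 + k$ with $0 \leq k < \beta - 1$, a direct computation yields $m = (i + n - 1) \bmod p = (p + k) \bmod p = k$, so $m \in \{0, \ldots, \beta - 2\}$. I then test each place on the bf-path against the regular initial marking $M_0$ from Corollary \ref{regular-M0-reg-coord}(a). The backward places of $M_0$ all have the form $[s'_\ell, a_0]$ with $p - \alpha \leq \ell < p$, so the only candidate on the path is $[s'_i, a_0]$, which fails because the hypothesis $i < p - \alpha$ excludes it; no $[s'_{(i+n) \bmod p}, a_j]$ with $j \geq 1$ can lie in $M_0$. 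Among the forward places of $M_0$, the singleton $[s_{p-1}, a_0]$ has $a$-index $0$ and cannot match any path place (all of which have $a$-index $\geq 1$), while the family $[s_\ell, a_{\ell+1}]$ for $0 \leq \ell < \beta - 1$ matches $[s_m, a_j]$ on the path exactly when $\ell = m = k$ and $j = k + 1$. Hence exactly one token sits on the bf-path, namely in $[s_k, a_{k+1}]$.

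The main technical obstacle is part (a): correctly unwinding the index arithmetic modulo $p$ and dovetailing the two distinct backward-input formulas of Lemma \ref{bw-input-place} — part (a) for the wrap-around transition $[t_m, a_{\beta-1}]$ at one end, and part (b) for all the remaining $[t_m, a_j]$ with $j < \beta - 1$ — so that each transition of the bf-path is covered exactly once and the repetition pattern $[t_m, a_j] \to [s_m, a_j] \to [t_{m+1}, a_j] \to [s'_{m+1}, a_j] \to [t_m, a_{j-1}]$ closes up cleanly. Once that pattern is identified, part (b) collapses to the short case check against $M_0$ described above.
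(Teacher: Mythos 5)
Your proposal is correct and follows essentially the same route as the paper: the first arc via Lemma~\ref{le-reg-coord}(a), the repeated four-step pattern $[t_m,a_j]\to[s_m,a_j]\to[t_{m\oplus_p 1},a_j]\to[s'_{m\oplus_p 1},a_j]\to[t_m,a_{j-1}]$ justified by Definition~\ref{d-reg-coord} and Lemma~\ref{bw-input-place}(b), and the single token located in $[s_k,a_{k+1}]$. You merely spell out the token-count check against the regular initial marking of Corollary~\ref{regular-M0-reg-coord}(a) in more detail than the paper, which simply asserts it.
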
 

\begin{proof} 
 By  Lemma \ref{le-reg-coord} the output transition of $[s'_i,a_0]$ is 
 $[s'_i,a_0]{}^{\ndot} = $ 
 \newline $[t_{(i+n-1)\; mod  \; p},a_{\beta-1}]$. 
By induction, from $[t_{(i+n-1)\; mod  \; p},a_{j}] \;\; (0<j<\beta)$ we come in 4 steps to 
 $[t_{(i+n-1)\; mod  \; p},a_{j-1}]$
 which gives for $j = 1$ the end of the bf-path $[t_{(i+n-1)\; mod  \; p},a_{0}]$.
 The place shared with the $a_j$-process  is marked in the lemma by an asterisk.
 The place $[s_k,a_{k+1}] \;\; (0 \leq k < \beta-1)$ contains the single token the bf-path starting in $[s'_i,a_0] $.
 \qed\end{proof}  
 
 For an example of a bf-path, see the highlighted path from $[s'_6,a_0]$ to $[t_3,a_0]$ in Figure \ref{3214} with $n=5, p = 7$. 
  It is sharing the place
 $[s_3,a_1]$  with the $a_1$-process.
The place $[s_0,a_{1}] $ contains the single token of the bf-path starting in $[s'_3,a_0] $. 

It is important to prove that under a mild restriction the backward folding of a cycloid is 
safe and live.

\begin{theorem} \label{bf-safe}
 The  backward folding $ \mathcal{C}_{bf(D)}( \alpha, \beta, \gamma, \delta, \eqcl[]{M_0}  ) $ of a regular cycloid system
 $ \mathcal{C}( \alpha, \beta, \gamma, \delta, M_0 ) $ with regular $M_0$ and $ n - 1 \leq p$ is a safe net, i.e. in each reachable marking each place contains at most one token.
\end{theorem}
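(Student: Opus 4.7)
The target reduces to bounding each backward class $\eqcl[]{S^D_i}$ by one token. The forward places of the folded net are exactly the forward places of the underlying regular cycloid, and each is covered by its forward-cycle P-invariant (the $a_j$-process carries a single token by Definition~\ref{regular-cycloid}); these invariants survive the folding because only backward places are identified. It therefore suffices to show that every class $\eqcl[]{S^D_i}$ contains at most one token in every reachable marking.

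The plan is to track the bf-path of Lemma~\ref{bf-path}. For $0 \leq i < p$ define the bf-path potential
\[
T_i(M) := M(\eqcl[]{S^D_i}) + \sum_{j \in D \setminus \{0\}} M([s_{(i+n-1)\bmod p}, a_j]),
\]
which counts the tokens lying on the $i$-th bf-path (assume $0 \in D$; the case $0 \notin D$ is analogous). I will prove by induction on firing sequences the stronger claim that in every reachable marking (i) each $T_i \in \{0, 1\}$ and (ii) the set $W(M) := \{i \mid T_i(M) = 1\}$ is a contiguous window of exactly $n - 1$ indices modulo $p$ whose right endpoint is the current position $p_0(M)$ of the $a_0$-token. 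The hypothesis $n - 1 \leq p$ is precisely what guarantees that a window of size $n - 1$ embeds into $\{0, \dots, p-1\}$ without self-overlap. The base case follows from Corollary~\ref{regular-M0-reg-coord}: the $\alpha$ backward tokens of $M_0$ mark the classes $\eqcl[]{S^D_{p-\alpha}}, \dots, \eqcl[]{S^D_{p-1}}$, and the $\beta - 1$ forward tokens $[s_{j-1}, a_j]$ sit exactly in the shared slots of bf-paths $p - n + 1, \dots, p - \alpha - 1$, so $W(M_0) = \{p - n + 1, \dots, p - 1\}$ with $p_0(M_0) = p - 1$.

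For the inductive step, fix a reachable $M$ satisfying the invariant and an enabled transition $t$, and use Lemma~\ref{bw-input-place} to identify the backward input and output classes of $t$. If $t = [t_m, a_j]$ with $j \in D \setminus \{0\}$, then the firing consumes the unique token of bf-path $m-n+1$ (which by the invariant must sit in the class, since folded enabling requires the class to be marked) together with the forward token of bf-path $m-n$ (which must sit in $[s_{m-1}, a_j]$, the position of the $a_j$-process), and produces one token in $[s_m, a_j]$ and one in the class $\eqcl[]{S^D_{m-n}}$; each $T_i$ is unchanged and both destination places were empty by the invariant, so $W$ is preserved and no class is overfilled. If $t = [t_m, a_0]$ then enabling forces $p_0(M) = m - 1$; in the non-degenerate case $n - 1 < p$ the consumed class $\eqcl[]{S^D_{m-n+1}}$ sits at the left endpoint of $W(M)$ while the produced class $\eqcl[]{S^D_m}$ lies one step past the right endpoint (and is empty by the invariant), so the window shifts to end at $p_0(M') = m$; in the boundary case $n - 1 = p$ the two classes coincide, the firing is a self-loop on that class, and $W$ is unchanged.

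The main obstacle is this emptiness check, because the folded enabling rule is superficially weaker than the unfolded one: it allows any output transition of a non-empty class to fire regardless of which constituent backward place actually carries the token. What makes the argument close is the tight coupling between the window invariant and the forward-input requirement: the forward input of $t$ pins down the position of the relevant process, which then (i) forces the consumed backward token to lie in exactly the constituent place $\prenbbw{t}$ and (ii) guarantees, together with $n - 1 \leq p$, that the destination class is empty (or coincides with the source class). Once the invariant is established, the bound $M(\eqcl[]{S^D_i}) \leq T_i(M) \leq 1$ yields class safeness, and combined with the forward-cycle bound this shows that every place of the folded net holds at most one token.
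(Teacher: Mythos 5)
Your induction works, but only for the \emph{total} folding; as written it does not prove the theorem for an arbitrary back-index set $D$ with $|D|>1$, which is what the statement covers. Your window invariant (ii) --- ``$W(M)$ is a contiguous block of exactly $n-1$ indices ending at the $a_0$-position'' --- is already false at the regular initial marking when $D\subsetneq\{0,\dots,\beta-1\}$: the forward tokens $[s_{j-1},a_j]$ with $j\notin D$ contribute to no $T_i$, so $W(M_0)$ has gaps at the indices $p-n+j$ for those $j$ and has fewer than $n-1$ elements. Hence the stronger claim you induct on fails in the base case for partial foldings, and the step for $t=[t_m,a_0]$ (the only place where contiguity and the endpoint-tracking are actually used, to get emptiness of the produced class $S^D_m$) loses its justification. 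A second, related hole: for partial $D$ the backward places that are \emph{not} folded (the singleton classes of $\equiv_{bf(D)}$) are covered neither by the process invariants nor by your $T_i$'s, and you cannot simply inherit their safety from the original cycloid system, because the folded enabling rule is weaker and behaviour equivalence (Theorem~\ref{bf-live}) is proved \emph{after}, and by means of, this safety theorem. For the total folding your argument does appear to go through: the base case, the invariance of all $T_i$ under $[t_m,a_j]$ with $j\neq 0$, the emptiness of the destination class from $T\le 1$ plus the marked forward input, and the window shift under $[t_m,a_0]$ (including the self-loop case $n-1=p$) all check out, and $n-1\le p$ enters exactly where you say.

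For comparison, the paper avoids the dynamic induction altogether: for every class $S^{bf}_i$ it extends the $i$-bf-path of Lemma~\ref{bf-path} along the $a_0$-process to a cycle, checks (in three cases, using $n-1\le p$ to avoid passing $[s_{p-1},a_0]$ in the marked case) that this cycle carries exactly one token under the regular initial marking, and uses it as a place invariant that is constant on equivalence classes; since $\equiv_{bf(D)}$ is finer than $\equiv_{bf}$, the same one-token cycles bound every $\equiv_{bf(D)}$-class and, because every backward place lies on some such cycle, every backward place as well --- uniformly in $D$. If you want to keep your approach, you would need either to generalize the invariant so that it is true for partial $D$ (and separately bound the unfolded backward places), or to prove the total case first and then derive the partial case by a simulation/projection argument from $\mathcal{C}_{bf(D)}$ to $\mathcal{C}_{bf}$, which itself requires an argument under the firing rule of Definition~\ref{def-net}.
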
 

\begin{proof} 
Since the cycloid system 
$ \mathcal{C}( \alpha, \beta, \gamma, \delta, M_0 ) $ is safe (Theorem 5.4 of \cite{Valk-2019}), it is sufficient to prove that each  equivalence class  $S^{bf}_i $ (Lemma \ref{eq-classes}) is contained in a S-invariant containing exactly one token.
Thus the same is holding for the classes $S^D_i$ of $ \equiv_{bf(D)}$.
As all places of a cycloid have exactly one input and output transitions the S-invariants can be defined by cycles. We distinguish two cases, namely those where  $[s'_i,a_0]$ is marked in the regular initial marking 
 $p-\alpha  \leq i < p$ and the complementary case $0 \leq i < p - \alpha$.

   Case 1: $[s'_i,a_0]$ is marked. Starting from $[s'_i,a_0]$ the cycle to be constructed initially is the bf-path until
   the $a_0$-process is reached and then follows this process until the input-transition of $[s'_i,a_0]$ is reached.
    Formally, since $[s'_i,a_0]$ is marked, we have  $p-\alpha  \leq i < p$ by Lemma \ref{regular-M0-reg-coord} and $i$ can be represented as $i= p-\alpha +k$ with $0  \leq k < \alpha$. Then the input transition of $[s'_i,a_0]$ is $[t_i,a_0] = [t_{p-\alpha +k},a_0]$.  The cycle is starting in $[s'_{p-\alpha +k},a_0]$
   and follows  the bf-path  which ends by  Lemma \ref{bf-path} in
  $[t_{(i+n-1)\; mod  \; p},a_0] = [t_{(p-\alpha +k+\alpha + \beta-1)\; mod  \; p},a_0] = 
 [t_{(p +k + \beta-1)\; mod  \; p},a_0] =
  [t_{k + \beta-1},a_0]
 $. From this transition the cycle follows the $a_0$-process until the 
 input transition of $[s'_i,a_0]$ namely $[t_i,a_0] = [t_{p-\alpha +k},a_0]$ is reached. This is possible without passing the end of the $a_0$-process to
 $[s_{p-1},a_0]$ if the inequality $k+\beta -1 \leq p - \alpha +k $ is holding. The inequality
  follows from the condition $n-1  \leq p$ of the lemma by $n-1  \leq p  \Leftrightarrow  \alpha + \beta -1  \leq \alpha + p - \alpha \Leftrightarrow \beta -1  \leq p - \alpha 
 \Leftrightarrow \beta -1+k \leq p - \alpha +k$.
The cycle contains a token in  
 $[s'_{i},a_0] $. The remaining forward places  are unmarked since by Lemma \ref{bf-path} the 
  $i$-bf-path is sharing the place $ [s_{(i+n-1)\; mod  \; p},a_{j}]$ with the $a_j$-process, which contains a single token in the place  $ [s_{j-1},a_{j}]$. The places are different since 
  $ [s_{(i+n-1)\; mod  \; p},a_{j}] = [s_{(p-\alpha+k+\alpha+\beta-1)\; mod  \; p},a_{j}] =
  [s_{k+\beta-1},a_{j}]$ and
  $j < \beta \Rightarrow j-1 < \beta -1 +k$ and $\beta +k < \beta + \alpha = n < p$ by the assumption of the 
  theorem.
 Also the places of the $a_0$-process are unmarked since the place $[s_{p-1},a_0]$ is not contained,
 
 Case 2: $[s'_i,a_0]$ is unmarked, hence 
 $0 \leq i < p - \alpha$. Here we consider two subcases: 
 $0 \leq i < p-n+1$ and $p-n+1  \leq i < p-\alpha$. 
 
 Case 2.1: $0  \leq i < p-n+1$. Again by Lemma \ref{bf-path} the bf-path starting in 
 $[s'_i,a_0]$ ends in $[t_{(i+n-1)\; mod  \; p},a_0]$ with $0 \leq i  \leq p-n$.
 Here, contrary to case 1, we are passing the marked place $[s_{p-1},a_0]$ to reach the input transition 
 $[t_i,a_0]$ of $[s'_i,a_0]$.
  The remaining places of the cycle are unmarked by the same arguments as before.

%
 Case 2.2: $p-n+1  \leq i < p-\alpha$  or $i= p-n+1+k$ with $0  \leq k < \beta -2$. 
 By the last sentence in Lemma \ref{bf-path} the bf-path starting in $[s'_i,a_0]$ contains exactly one token in the regular initial marking. This bf-path is extended to a cycle as in Case 1.                                                                                                                                                                                      
  \qed\end{proof}

\begin{figure}[htbp]
 \begin{center}
        \includegraphics [scale = 0.313]{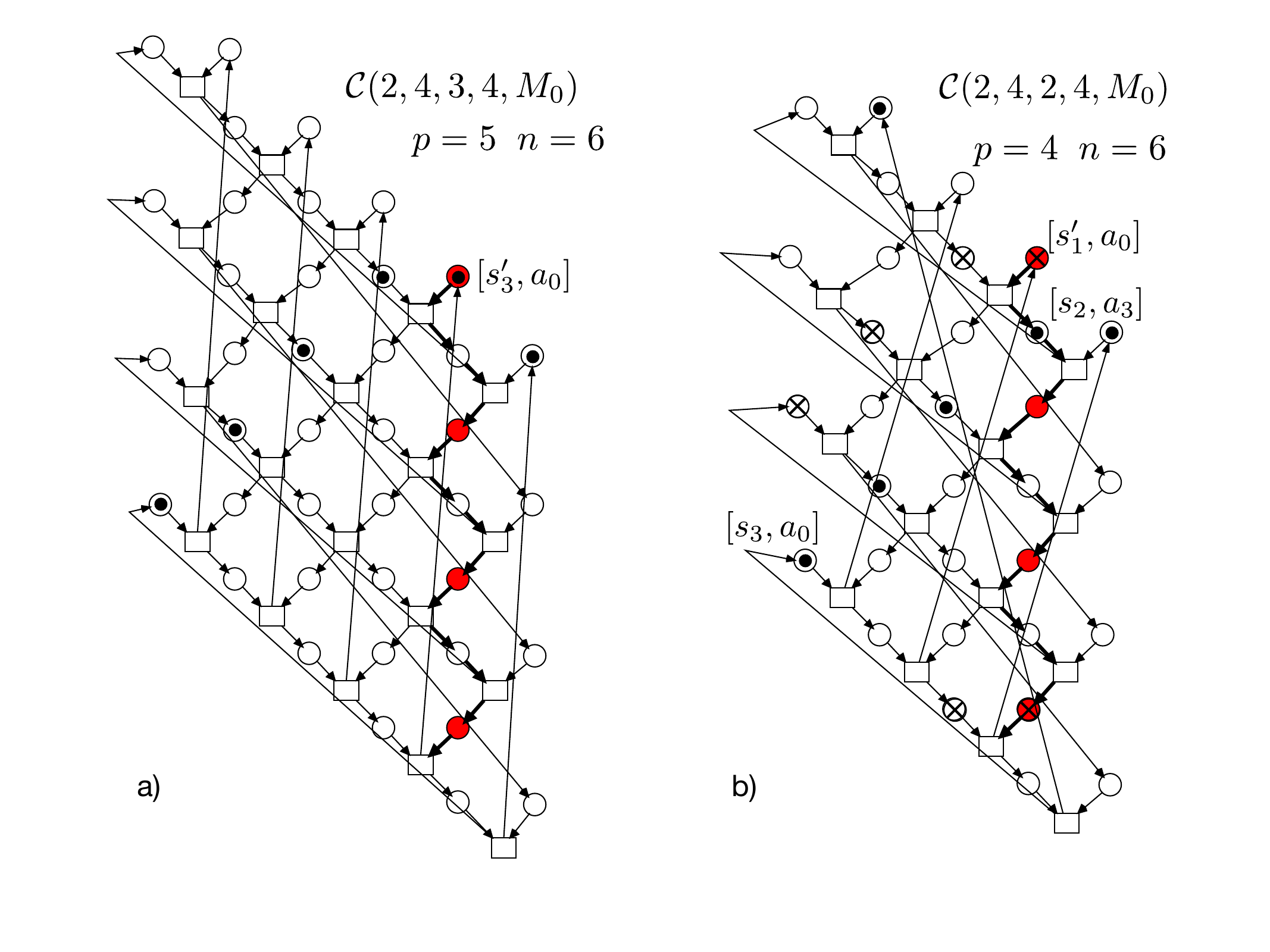}
        \caption{Cycloids with $n-1 = p$ and $n-2 = p$}
        \label{n-1=p}
      \end{center}
\end{figure}

By the cycloid systems of Figure \ref{n-1=p} the proof of Theorem \ref{n-1=p} is illustrated and it is shown that the assumption $ n - 1 \leq p$ of the theorem is sharp. 
For the cycloid system in a) the equivalence class   $S^{bf}_3 $    is given by places  represented by grey (or red) filled circles. They are connected by the $3$-bf-path in bold arrows, which is contained in a single marked cycle. This does not hold for the $1$-bf-path in part b) of the figure. A cycle contains two tokens in $[s_3,a_0]$ and $[s_2,a_3]$ in the given regular initial marking. The backward folding $ \mathcal{C}_{bf}(2,4,2,4,  \eqcl[]{M_0}  ) $ is not safe since the places marked by a cross represent a reachable marking where $S^{bf}_1$ contains two tokens.

\begin{theorem} \label{bf-live}
 Let be  $ \mathcal{C}_{bf(D)} =\mathcal{C}_{bf(D)}( \alpha, \beta, \gamma, \delta,  \eqcl[D]{M_0}  ) $ the  backward folding of a regular cycloid system
 $\mathcal{C} =  \mathcal{C}( \alpha, \beta, \gamma, \delta, M_0 ) = (S,T,F,M_0) $ with $ n - 1 \leq p$.
 \begin{itemize}
       \item [a)]   
 For each transition $t\in T$ and all markings $M, M'$ of $\mathcal{C} $
 $$	 M \stackrel{t}{\rightarrow}M'  \;\; \Leftrightarrow \;\; \eqcl[D]{M}  \xrightarrow[\text{\tiny{bf(D)}}]{\text{t}} \eqcl[D]{M'}
 $$
 where $\stackrel{t}{\rightarrow}$ and $ \xrightarrow[\text{\tiny{bf(D)}}]{\text{t}}$ are the transition relation of the net 
 $\mathcal{C}$ and $ \mathcal{C}_{bf(D)} $, respectively. 
 \item [b)]  $ \mathcal{C}_{bf(D)}$ is live.
\end{itemize}   
 \end{theorem}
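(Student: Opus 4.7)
The plan is to treat the folding as a bisimulation between $\mathcal{C}$ and $\mathcal{C}_{bf(D)}$ and derive both parts from it. For part~(a), the direction ($\Rightarrow$) is essentially definitional: the relation $\equiv_{bf(D)}$ leaves every transition and every forward place untouched, so whenever $M \xrightarrow{t} M'$ holds in $\mathcal{C}$, the forward input of $t$ persists in $\eqcl[D]{M}$, its backward input contributes a token to its class, and firing $t$ in $\mathcal{C}_{bf(D)}$ consumes and produces exactly the corresponding class representatives, giving precisely $\eqcl[D]{M'}$.

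For the direction ($\Leftarrow$) I would argue by induction on the length of a firing sequence in $\mathcal{C}_{bf(D)}$ starting from $\eqcl[D]{M_0}$, establishing simultaneously (i)~that every reachable folded marking $\widetilde M$ is the image of a unique reachable marking $M$ of $\mathcal{C}$, and (ii)~that enablings of a transition $t$ in the two nets coincide. The only delicate point in the inductive step is that when $t$ is enabled in $\widetilde M$, its specific backward input $p^b_t$ is marked in $M$, not merely some representative of its class. The $|D|$ or $|D|+1$ transitions sharing a class $\eqcl[]{S^D_i}$ all have the common index $k=(i+n-1)\,\mathrm{mod}\,p$ but pairwise distinct forward inputs $[s_{k-1},a_j]$; Theorem~\ref{bf-safe} places exactly one token on the S-invariant through $\eqcl[]{S^D_i}$, and a simultaneously maintained distinct-column invariant (the $\beta$ processes occupy pairwise distinct forward positions in every reachable marking) ensures that at most one of those forward inputs can be marked at the same time, which pins the class's single token to the unique place $p^b_t$.

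Part~(b) drops out once (a) is established: the folding then induces an isomorphism between the reachability graphs of $\mathcal{C}$ and $\mathcal{C}_{bf(D)}$, so liveness transfers in both directions. Under the hypothesis $n-1\le p$ the regular cycloid system $\mathcal{C}(\alpha,\beta,\gamma,\delta,M_0)$ is itself live by the standard circular-queue argument (from any reachable state one may advance the $\beta$ car-transitions in turn until every transition eventually becomes fireable), and hence so is $\mathcal{C}_{bf(D)}$. The main technical obstacle is proving the distinct-columns invariant and reconciling it with the S-invariant of Theorem~\ref{bf-safe}: once those fit together the remainder is routine token-bookkeeping.
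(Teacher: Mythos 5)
Your ($\Rightarrow$) direction of (a) and your reduction of (b) to (a) are in line with the paper (the paper also derives liveness of $\mathcal{C}_{bf(D)}$ from behavioural equivalence plus liveness of the unfolded regular cycloid, citing Theorem 5.4 of \cite{Valk-2019}). The genuine gap is in your ($\Leftarrow$) direction, precisely at the step you dismiss as ``routine token-bookkeeping''. Write $t=[t_i,a_j]$; all output transitions of the folded class containing its backward input are the siblings $[t_i,a_m]$, $0\le m<\beta$, with forward inputs $[s_{i\ominus_p 1},a_m]$. The two facts you invoke --- the class lies on a one-token S-invariant (Theorem \ref{bf-safe}) and a ``distinct columns'' property of the forward tokens --- do not exclude the bad configuration in which the class's single token sits on the backward input of a sibling $[t_i,a_m]$ with $m\ne j$ (whose forward input is unmarked) while $[s_{i\ominus_p 1},a_j]$ is marked and the specific place $[s'_h,a_{j\oplus 1}]$ is not. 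The invariant built in Theorem \ref{bf-safe} runs through the forward places of column $i$ (the places starred in Lemma \ref{bf-path}) and a stretch of the $a_0$-process, so it says nothing about $[s_{i\ominus_p 1},a_j]$; and distinct columns only bounds how many of the places $[s_{i\ominus_p 1},a_m]$ are marked, not \emph{which} element of the class carries the token. Yet this bad configuration is exactly the one that would enable $t$ in $\mathcal{C}_{bf(D)}$ but not in $\mathcal{C}$, so nothing in your argument pins the class token to $p^b_t$. Moreover, the distinct-columns invariant itself is nowhere established (it is not among the paper's lemmas), so even the weaker part of your step rests on an unproven claim.

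What is needed is an invariant correlating the identity of the marked class element with the forward position of the corresponding process, and this is the paper's key device, which your proposal is missing: a \emph{rerouting} of the cycle through the bf-path of the class. Replacing the sub-path $[t_{h\ominus_p 1},a_{j\oplus 1}],\,[s_{h\ominus_p 1},a_{j\oplus 1}],\,[t_{h},a_{j\oplus 1}],\,[s'_{h},a_{j\oplus 1}],\,[t_i,a_j]$ by $[t_{h\ominus_p 1},a_{j\oplus 1}],\,[s'_{h\ominus_p 1},a_{j\oplus 1}],\,[t_{i\ominus_p 1},a_j],\,[s_{i\ominus_p 1},a_j],\,[t_i,a_j]$ yields a second cycle, hence a one-token S-invariant, that contains both the forward input $[s_{i\ominus_p 1},a_j]$ and all elements of the class except $[s'_h,a_{j\oplus 1}]$; markedness of the forward input then forces those class elements to be empty, and the exceptional element is empty by the case assumption, so the folded place is unmarked and $t$ is not enabled in $\mathcal{C}_{bf(D)}$. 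Unless you add such a construction (or prove the correlation invariant directly by induction), your induction does not close; with it, the rest of your argument is sound.
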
 

\begin{proof} 
a) We first prove the equivalence of the activations $M \stackrel{t}{\rightarrow} \;\; \Leftrightarrow \;\; \eqcl[D]{M}  \xrightarrow[\text{\tiny{bf(D)}}]{\text{t}} $.
\\
Let be $t = [t_i,a_j]$ a transition of $ \mathcal{C}$. If $M \stackrel{t}{\rightarrow} $ then the input places 
$ [s_{i\ominus_p 1},a_j]$  and  $[s'_{h  },a_{j\oplus_p  1}]$ (see Figure \ref{proof-live}) are marked. It follows that the input places $ [s_{i\ominus_p 1},a_j]$  and  $\eqcl[D]{ [s'_{h  },a_{j\oplus_p  1}]}$ of $[t_i,a_j]$ in $ \mathcal{C}_{bf(D)}$ are marked and $\eqcl[D]{M}  \xrightarrow[\text{\tiny{bf(D)}}]{\text{t}} $.  
\begin{figure}[htbp]
 \begin{center}
        \includegraphics [scale = 0.2]{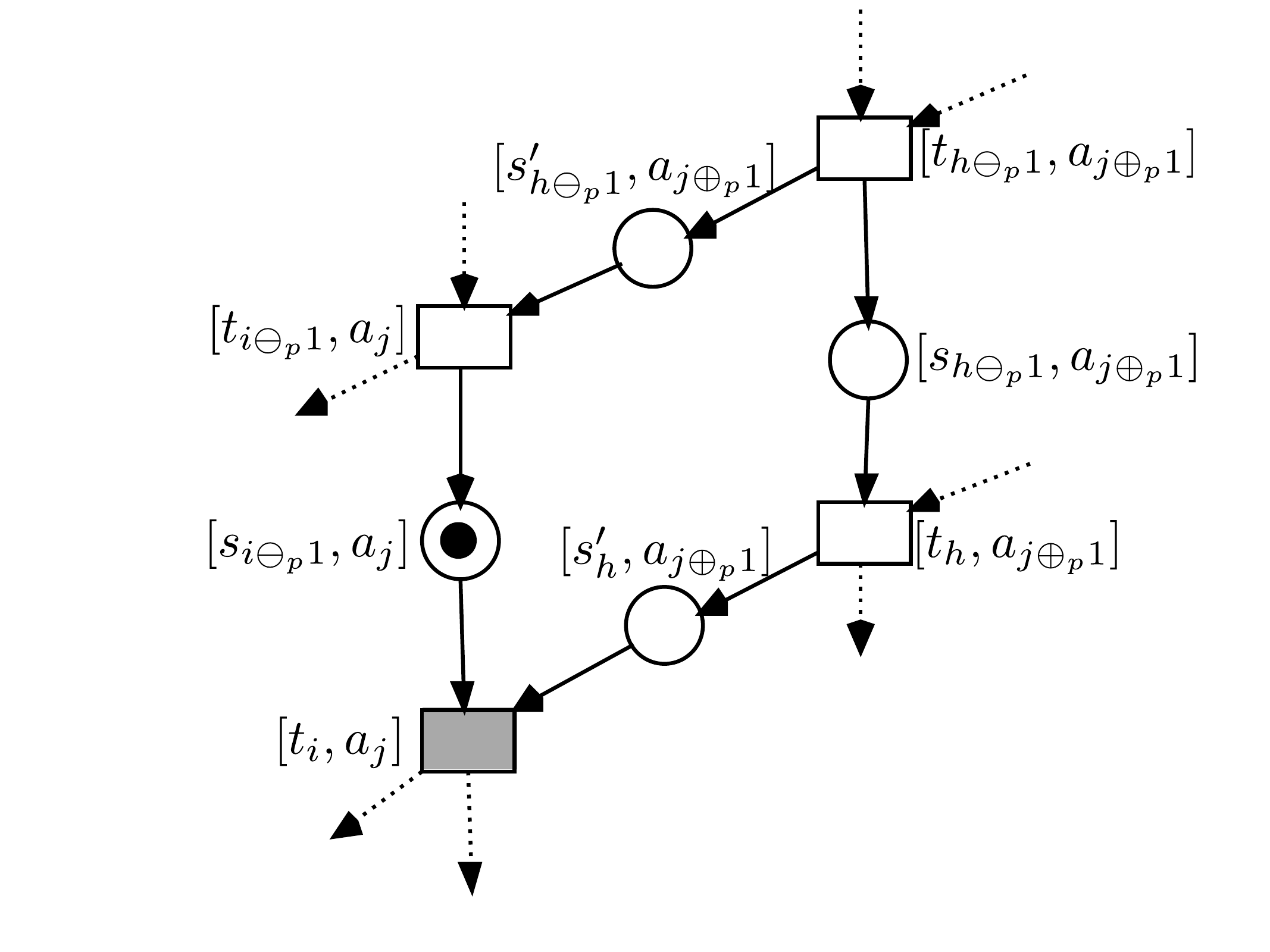}
        \caption{Case 2 of the proof of Theorem \ref{bf-live}}
        \label{proof-live}
      \end{center}
      
\end{figure}
The inverse implication is proved by contradiction: $\lnot \; M \stackrel{t}{\rightarrow} \;\; \Rightarrow \;\; \lnot \; \eqcl[D]{M} 
\xrightarrow[\text{\tiny{bf(D)}}]{\text{t}} $. If $\lnot \; M \stackrel{t}{\rightarrow}$ there are two cases:\\
Case 1: $ [s_{i\ominus_p 1},a_j] \notin M$. As $ [s_{i\ominus_p 1},a_j] $ is unchanged in $ \mathcal{C}_{bf(D)}$ also $\lnot \; \eqcl[D]{M}  \xrightarrow[\text{\tiny{bf(D)}}]{\text{t}} $.\\
Case 2: $ [s_{i\ominus_p 1},a_j] \in M$ but $[s'_{h  },a_{j\oplus_p  1}] \notin M$. To prove 
$\lnot \; \eqcl[D]{M}  \xrightarrow[\text{\tiny{bf(D)}}]{\text{t}} $ we deduce that the class 
$\eqcl[bf(D)]{[s'_{h  },a_{j\oplus_p  1}]}$  
of $[s'_{h  },a_{j\oplus_p  1}]$ is not in $\eqcl[]{M_0} _D$.
This is the case if all elements of the class 
 $S^{bf}_h =\{[s'_h,a_0]\} \cup \{[s'_{h\oplus_p n},a_j]| 0 < j < \beta\}$  with $0 \leq i < p$ 
are unmarked. For the element $[s'_{h  },a_{j\oplus_p  1}] $ of this set, this property is part of the assumption of Case 2. To prove it for the remaining elements of this set consider the bf-path of  $\eqcl[D]{s'_h}$, as defined in Lemma \ref{bf-path} containing all these elements. As shown in Lemma  \ref{bf-path} this bf-path is extended to a S-invariant. We define a rerouting of the corresponding cycle by replacing the sub-path \\
$ [t_{h\ominus_p  1},a_{j\oplus_p  1}], [s_{h\ominus_p  1},a_{j\oplus_p  1}],[t_{h  },a_{j\oplus_p  1}],  [s'_{h  },a_{j\oplus_p  1}], [t_i,a_j]$
by \\
$ [t_{h\ominus_p  1},a_{j\oplus_p  1}], [s'_{h\ominus_p  1},a_{j\oplus_p  1}],[t_{i\ominus_p  1},a_j],  
[s_{i\ominus_p 1},a_j], [t_i,a_j]$ 
(see Figure \ref{proof-live}), which is also a S-invariant containing the place $ [s_{i\ominus_p 1},a_j] $ which is marked in Case 2. Since the net is safe by Theorem \ref{bf-safe} all places different to  $ [s_{i\ominus_p 1},a_j] $  are unmarked, including the elements of $S^{bf}_h $.\\
b) By part a) of the proof $ \mathcal{C}_{bf(D)}$ is behavioural equivalent to $ \mathcal{C}$ which is live by Theorem 5.4 of \cite{Valk-2019}.
\qed\end{proof}

When in a circular traffic queue $tq(c,g)$ a car is removed the system should behave like a circular traffic queue $tq(c-1,g+1)$ with one car less and one gap more. 

\begin{lemma} \label{reg-proc}
a) For a given regular cycloid  $ C =\mathcal{C}( \alpha, \beta, \gamma, \delta, M_0 ) $ with $\beta > 1$
(and process length $p$)
 the cycloid 
$ C' =\mathcal{C}( \alpha +1 , \beta -1, p - (\alpha + 1), \beta-1, M' ) $ 
has the same process-length 
as $C$.\\
b)    Each regular cycloid $ C'' =\mathcal{C}( \alpha +1 , \beta -1, \gamma'', \delta'', M'') $\footnote{$M_0,M',M''$ are the corresponding regular initial markings.} with the same process-length $p$ is isomorphic to $C'$.
\end{lemma}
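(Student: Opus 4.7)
For part a), the plan is a direct computation. The cycloid $C'$ has parameters $\alpha' = \alpha+1$, $\beta' = \beta-1$, $\gamma' = p-(\alpha+1)$, $\delta' = \beta-1$. Since $\beta' = \delta'$, clearly $\beta' \mid \delta'$, so Theorem \ref{th-regular} guarantees that $C'$ is regular. Its area is
\[
A' = \alpha'\delta' + \beta'\gamma' = (\alpha+1)(\beta-1) + (\beta-1)\bigl(p-(\alpha+1)\bigr) = (\beta-1)\,p,
\]
so the process length of $C'$ is $A'/\beta' = p$, as required.

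For part b), I would reduce the problem to a single application of the symmetry Theorem \ref{symmetry}. Since $C''$ is regular with $\beta'' = \beta-1$, Theorem \ref{th-regular} gives $(\beta-1)\mid\delta''$; write $\delta'' = k(\beta-1)$ with $k \in \Natp$. The process length condition $p = A''/\beta''$ then reads
\[
p \;=\; (\alpha+1)k + \gamma'',
\]
so $\gamma'' = p - (\alpha+1)k$.

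Now I would apply Theorem \ref{symmetry} c) to $C''$ with the shift $q = k-1$. The admissibility hypothesis $\delta'' > q\beta''$ becomes $k(\beta-1) > (k-1)(\beta-1)$, which holds because $\beta>1$. The new parameters are
\[
\bigl(\alpha+1,\; \beta-1,\; \gamma'' + (k-1)(\alpha+1),\; \delta'' - (k-1)(\beta-1)\bigr) \;=\; \bigl(\alpha+1,\; \beta-1,\; p-(\alpha+1),\; \beta-1\bigr),
\]
i.e., exactly the parameters of $C'$. Hence $C'' \simeq C'$. The case $k=1$ is handled separately (no shift is needed since already $C''=C'$).

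The only subtle point—and the one I would watch for—is the edge case where $\gamma'' = p - (\alpha+1)k$ might fail to be a positive integer (i.e., $k$ too large) or the shift parameter $q = k-1$ might be $0$ and thus not allowed by Theorem \ref{symmetry} c); the former is automatically ruled out by the assumption that $C''$ exists as a cycloid (so $\gamma'' \in \Natp$), and the latter degenerates into the trivial identification $C'' = C'$. No deeper combinatorial argument is needed beyond a careful bookkeeping of these boundary situations.
\qed
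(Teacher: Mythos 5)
Your proposal is correct and follows essentially the same route as the paper: part a) is the same area computation, and part b) likewise uses regularity to write $\delta''$ as a multiple of $\beta-1$, the process-length equation to pin down $\gamma''$, and then a parameter shift via Theorem~\ref{symmetry}. The only (inessential) difference is that you apply Theorem~\ref{symmetry}~c) to $C''$ with $q=k-1$, whereas the paper views $C''$ as the image of $C'$ under Theorem~\ref{symmetry}~b); your direction even makes the admissibility condition automatic and you handle the degenerate shift $q=0$ explicitly, which the paper leaves implicit.
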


\begin{proof}  
a) If $A'$ is the area of $C'$ the process-length of $C'$ is \\
$p' = \frac{A'}{\beta-1}= \frac{1}{\beta-1}\cdot ((\alpha+1)\cdot(\beta-1) + (\beta-1) \cdot (p - (\alpha +1) ) =p$.\\
b) To prove that $C'$ and $C''$ are isomorphic, by Theorem \ref{symmetry} b) it is sufficient to show that
for some $ q \in \Natp $ we have
\begin{equation}\label{Gl-reg-proc}
\gamma'' =  \gamma' - q \cdot (\alpha+1)  \text{\;\;and\;\;}  \delta'' = (\beta-1)+ q \cdot (\beta-1)  \text{\;\;for\;\;} 
\gamma' = p-(\alpha+1)
\end{equation}
Since $C''$ is supposed to be regular we have $\delta'' = r \cdot (\beta-1)$ for some $ r \in \Natp $.
$C''$ is also required to have the same process-length 
$p'' = \frac{1}{\beta-1}\cdot ((\alpha+1)\cdot \delta'' + (\beta-1) \cdot \gamma'' ) = 
\frac{1}{\beta-1}\cdot ((\alpha+1)\cdot r \cdot (\beta-1) + (\beta-1) \cdot \gamma'' ) =$ 
$ (\alpha+1)\cdot r  +  \gamma''  $
as $p = \gamma' + (\alpha +1) $. 
$p=p''$ gives 
$ \gamma' + (\alpha +1) =  (\alpha+1)\cdot r + \gamma''  $
and
$ \gamma''   = \gamma' + (\alpha +1) - (\alpha+1)\cdot r = \gamma' - (r-1) \cdot (\alpha+1)$,
 hence $q = r-1$
 as required in Equation (\ref{Gl-reg-proc}).
The observation $\delta'' = r \cdot (\beta-1) = (q+1) \cdot (\beta-1) = (\beta-1)+ q \cdot (\beta-1)$
satisfies the second part of Equation (\ref{Gl-reg-proc}).
\qed\end{proof}

A bf-folding with $D =  \{ 0,\beta-1 \} $ is merging the backward input places of the processes of the traffic items $a_{\beta-1}$ and $a_{\beta-2}$. Then the $a_{\beta-1}$-process can be stopped or eliminated such that the remaining cycloid is still live and safe with the same process length. This is done with the cycloid 
$ \mathcal{C}_0$ in Figure \ref{2346-3252}. Via the intermediate step of the bf-folding $ \mathcal{C}_1$ a cycloid $ \mathcal{C}_2$ is obtained, where the $a_2$-process is eliminated.

\begin{figure}[htbp]
\hspace{-1 cm}
        \includegraphics [scale = 0.2 ]{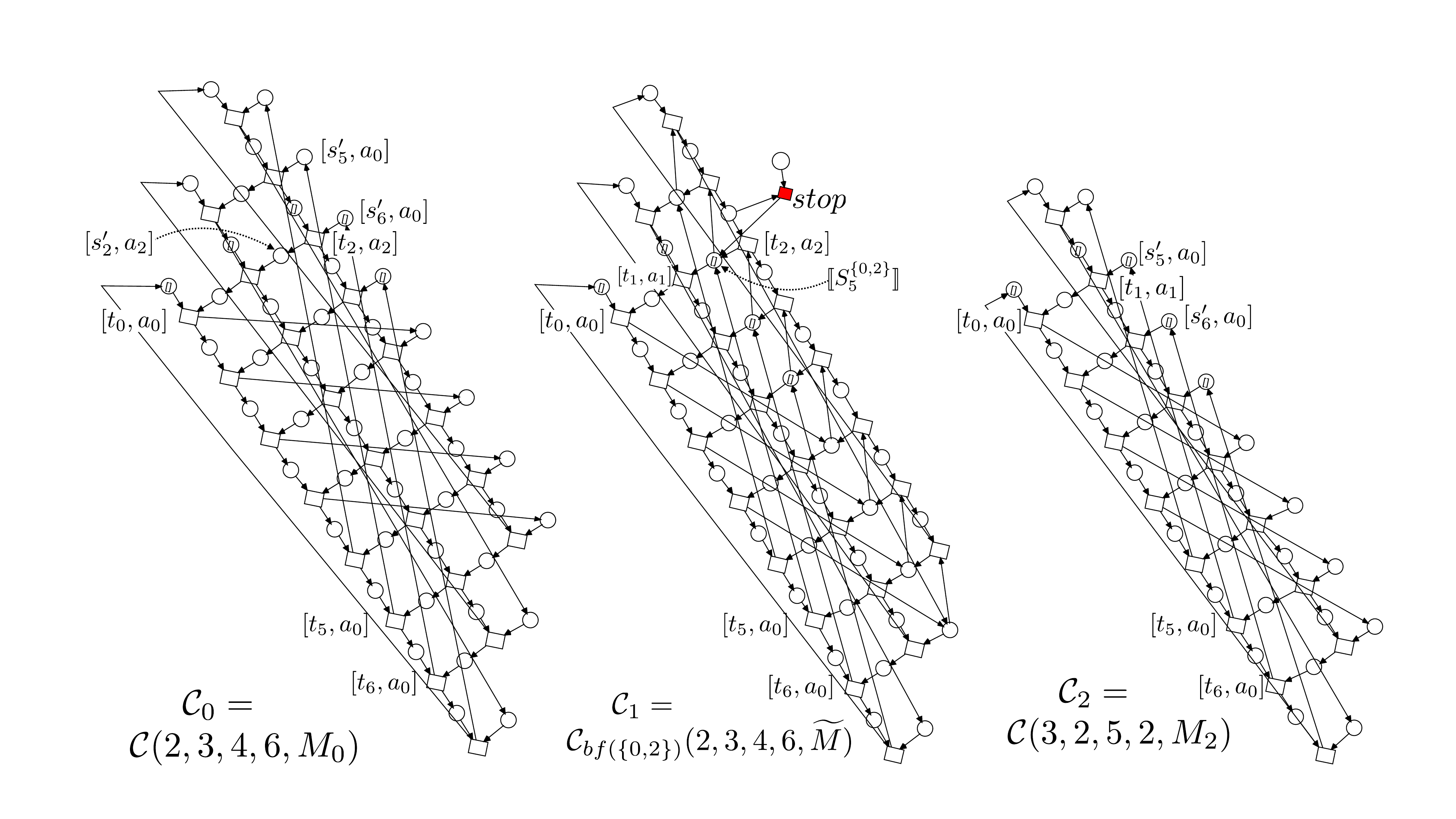}
        \caption{Two cycloids and bf-folding $\mathcal{C}_{bw(2,1)}( 2,3,4,6,\widetilde{M} )$ as 	intermediate step. }
        \label{2346-3252}
\end{figure}

\begin{definition} \label{M_downarrow}
For a bf-folding $ \mathcal{C}_1 =\mathcal{C}_{bf(D)}( \alpha, \beta, \gamma, \delta, \eqcl[D]{M_0} ) $ 
with \\
$D= \{ 0,\beta-1 \} $ we define 
$[\mathcal{C}_1]\downarrow(\beta-1)$ by deleting all transitions and forward places of the 
$a_{\beta-1}$-process.
\end{definition}

\begin{theorem} \label{cyc-beta-1}
Let be   $ \mathcal{C}_0 =\mathcal{C}( \alpha, \beta, \gamma, \delta,M_0) $ a regular cycloid 
with $\beta  > 1$.\\
a) $ \mathcal{C}_2 = \mathcal{C}( \alpha+1, \beta-1,  p - (\alpha + 1), \beta-1,M_2 )$\footnote{$M_0,M_2$ are the corresponding regular initial markings.} 
is behaviour equivalent to 
     $\mathcal{C}_1 = \mathcal{C}_{bf(\{0,\beta-1)\}}( \alpha, \beta, \gamma, \delta,\widetilde{M} ) $
     where 
     $\widetilde{M} = \eqcl[]{M_0} -\{[s_{\beta \ominus_p  2},a_{\beta-1}]\} \cup 
\{\eqcl[ ]{S^ {\{0,\beta-1\}}_{p-(\alpha+1)}}\}$
b) $\mathcal{C}_2$ is isomorphic to $\mathcal{C}_3 =[C_1]\downarrow(\beta-1)$.
\end{theorem}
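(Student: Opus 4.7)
\textit{Proof Plan.} I would prove part (b) first, since it underlies part (a). The plan is to exhibit the natural identity-like map $\phi : \mathcal{C}_3 \to \mathcal{C}_2$ sending each transition $[t_i,a_j]$ and each unaffected place to itself, and sending the merged class $\eqcl[]{[s'_i,a_0]}$ of $\mathcal{C}_3$ to the place $[s'_i,a_0]$ of $\mathcal{C}_2$. The forward flow is preserved trivially, and for the non-merged backward places $[s'_i,a_j]$ with $0<j<\beta-1$, Lemma \ref{le-reg-coord}(b) gives the same output on both sides since $j \ominus_{\beta-1} 1 = j-1 = j \ominus_\beta 1$ in this range. The one substantive case is the merged class: after deletion of the $a_{\beta-1}$-transitions, by Lemma \ref{eq-classes}(a) and Lemma \ref{le-reg-coord} the class $\eqcl[]{[s'_i,a_0]}$ has unique input $[t_i,a_0]$ and unique surviving output $[t_{(i+n-1) \bmod p},a_{\beta-2}]$, obtained via the representative $[s'_{i \oplus_p n}, a_{\beta-1}]$ through Lemma \ref{le-reg-coord}(b). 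In $\mathcal{C}_2$ the corresponding value is $n' = (\alpha+1)+(\beta-1) = n$, so Lemma \ref{le-reg-coord}(a) yields exactly $[s'_i,a_0]^\ndot = [t_{(i+n-1) \bmod p},a_{\beta-2}]$. Alternatively, since $\mathcal{C}_3$ has $\beta-1$ disjoint forward cycles of length $p$ and backward connectivity matching a cycloid of first parameter $\alpha+1$, Lemma \ref{reg-proc}(b) applies and immediately yields $\mathcal{C}_3 \cong \mathcal{C}_2$.

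For part (a), I would transport behavior along $\phi$. The key preliminary observation is that in $(\mathcal{C}_1, \widetilde{M})$ the $a_{\beta-1}$-process is permanently dead: by Corollary \ref{regular-M0-reg-coord} its unique initial forward token is $[s_{\beta \ominus_p 2}, a_{\beta-1}]$, and this is precisely the token $\widetilde{M}$ removes. Since the sole input of any place $[s_i,a_{\beta-1}]$ is $[t_i,a_{\beta-1}]$, which itself requires an $a_{\beta-1}$-forward token to fire, no $a_{\beta-1}$-transition is ever enabled from $\widetilde{M}$. Consequently the reachability graph of $(\mathcal{C}_1,\widetilde{M})$ coincides with that of $(\mathcal{C}_3, \widetilde{M}|_{\mathcal{C}_3})$, and by (b) this is isomorphic as a marked net to $(\mathcal{C}_2, \phi(\widetilde{M}|_{\mathcal{C}_3}))$.

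It then remains to check $\phi(\widetilde{M}|_{\mathcal{C}_3}) = M_2$ by a direct token count using Corollary \ref{regular-M0-reg-coord} and Definition \ref{de-bf-folding}. The forward tokens in $\widetilde{M}|_{\mathcal{C}_3}$ are $\{[s_{p-1},a_0]\} \cup \{[s_i,a_{i+1}] \mid 0 \leq i < \beta-2\}$, matching the forward part of the regular initial marking of $\mathcal{C}(\alpha+1,\beta-1,\cdot,\cdot)$. The backward tokens are $\{\eqcl[]{[s'_i,a_0]} \mid p-\alpha \leq i < p\} \cup \{\eqcl[]{S^{\{0,\beta-1\}}_{p-(\alpha+1)}}\}$, and under $\phi$ these map to $\{[s'_i,a_0] \mid p-(\alpha+1) \leq i < p\}$, which is exactly the backward part of $M_2$ as prescribed by Definition \ref{regular-M0} for $\mathcal{C}_2$. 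The added class precisely fills the single missing position $p-(\alpha+1)$, and the net effect is the replacement of one car by one gap, consistent with the traffic-queue interpretation $tq_1(c,g) \to tq_1(c-1,g+1)$.

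The main obstacle I expect is the arithmetic matching in (b): one must verify that the merged backward class inherits precisely the connectivity required by a cycloid with first two parameters $(\alpha+1,\beta-1)$. The identity $(\alpha+1)+(\beta-1) = \alpha+\beta = n$ is what makes this work, and one must be careful to compute the output of the merged class via the surviving representative $[s'_{i \oplus_p n},a_{\beta-1}]$ and Lemma \ref{le-reg-coord}(b), rather than via Lemma \ref{le-reg-coord}(a) which would lead to a deleted transition $[t_{(i+n-1) \bmod p}, a_{\beta-1}]$. A secondary subtlety in (a) is that behavioral equivalence should be invoked via Theorem \ref{bf-live}(a) applied to the sub-net after restricting to $\mathcal{C}_3$, to ensure that the bf-folding's firing semantics is faithful on this dead-process-restricted subsystem.
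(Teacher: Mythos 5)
Your proposal is correct and follows essentially the same route as the paper: the same renaming isomorphism $\phi$ sending each merged class to $[s'_i,a_0]$, the same verification of its input/output transitions via Lemma \ref{le-reg-coord} using $n'=n$, $p'=p$, $\beta'-1=\beta-2$, the same check that the folded marking maps to the regular initial marking $M_2$ via Corollary \ref{regular-M0-reg-coord}, and the same deadness argument for the $a_{\beta-1}$-process in part (a). Only your side remark that Lemma \ref{reg-proc}(b) would "immediately" give the isomorphism is dispensable (one must already know $\mathcal{C}_3$ is a regular cycloid with parameters $(\alpha+1,\beta-1,\cdot,\cdot)$, which is exactly what the direct verification establishes), but your main argument does not rely on it.
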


\begin{proof} 
a) By the elimination of the token in $[s_{\beta \ominus_p  2},a_{\beta-1}]\in M$ all transitions of the $a_{\beta - 1}$-process are dead. Therefore it is sufficient to prove part b) of the theorem.\\
b) We denote by $f_D$ the folding from  $ \mathcal{C}_0$ to  $ \mathcal{C}_1$ defined by $ \equiv_{bf(D)}$
and define an isomorphism $\phi$ from  $ \mathcal{C}_3$ to  $ \mathcal{C}_2$. 
Since $D = \{ 0,\beta-1 \}$ the folding is defined for places or transitions $x$ of  $ \mathcal{C}_0$ by 
$f_D(x) = 
\begin{cases}
                      \eqcl[ ]{S^D_{i}}&  \, \text{if}  \, \, x \in S^D_{i}   \\
                      x & \, \text{otherwise}
\end{cases}$, where 
$S^D_{i}:=\{[s'_i,a_0]\} \cup \{[s'_{i\oplus_p n},a_{\beta-1}]\}$  with $0 \leq i < p$ (Lemma \ref{eq-classes} b).
Now we define the isomorphism by  
$\phi(x) = 
\begin{cases}
                      [s'_i,a_0] &  \, \text{if}  \, \, x =  \eqcl[ ]{S^D_{i}}  \\
                      x & \, \text{otherwise}
\end{cases}$, both for $0  \leq i <p$. \\
The isomorphism is a renaming of $\eqcl[ ]{S^D_{i}} $ in  $ \mathcal{C}_3$ to $[s'_i,a_0] $ in  $ \mathcal{C}_2$.
All different names of places or transitions are unchanged. Therefore, to prove that $\phi$ is an isomorphism, it is sufficient to show that the input and output transitions of $\eqcl[ ]{S^D_{i}} $ in  $ \mathcal{C}_3$ and
 $[s'_i,a_0] $ in  $ \mathcal{C}_2$ are equally named.
 The input and output transitions of $\eqcl[ ]{S^D_{i}} $ are those of $[s'_i,a_0]$ and $[s'_{i\oplus_p n},a_{\beta-1}]$ in  $ \mathcal{C}_0$. Using Lemma \ref{le-reg-coord} a) we obtain
$[s'_{i},a_{0}]^{\ndot} = [t_{(i+n-1)\; mod  \; p},a_{\beta-1}]$
and by 
Lemma \ref{le-reg-coord} b) we obtain
$[s'_{i\oplus_p n},a_{\beta-1}]^{\ndot} = [t_{(i+n-1)\; mod  \; p},a_{\beta-2}]$. The input places have the same indices as their output places. They are all summarized in Table \ref{input-output}.
%
\begin{table}[htbb]
 \begin{center}
\caption{}
\label{input-output}
 \begin{tabular}{|c|c|c|c|c}
  \hline 
$\mathcal{C}_0  $             &$\mathcal{C}_1$           & input transition &  output transition              \\  \hline\hline
$[s'_i,a_0]$    &  \multirow{2}{*}{$S^D_i$}& $[t_i,a_0]$ &\cellcolor{Gray}$[t_{(i+n-1)\; mod  \; p},a_{\beta-1}]$     \\ 

$[s'_{i\oplus_p n},a_{\beta-1}]$     & & \cellcolor{Gray}$[t_{i\oplus_p n},a_{\beta-1}]$&$[t_{(i+n-1)\; mod  \; p},a_{\beta-2}]$       \\  \hline
       \end{tabular}  
       \end{center}
       \end{table}

By the definition of  $ \mathcal{C}_2$ the transitions of the $\beta-1$-processes are eliminated 
(marked with grey background in Table \ref{input-output}). The parameters of  $ \mathcal{C}_2$ in relation to  $ \mathcal{C}_1$ are $\alpha' =\alpha+1, \beta'= \beta-1, p'= p, n' = n$ and $ i' = i$.
We have to prove that the input and output transitions of $\phi( \eqcl[ ]{S^D_{i}}) =  [s'_i,a_0]$ are preserved by the isomorphism. This is holding since $\phi( {}^{\ndot}{\eqcl[ ]{S^D_{i}}}) = \phi([t_{i},a_{0}]) = [t_{i'},a_{0}]= [t_{i},a_{0}]$
and $\phi( \eqcl[ ]{S^D_{i}} ^{\ndot}) = \phi([t_{(i+n-1)\; mod  \; p},a_{\beta-2}]) =
 [t_{(i'+n'-1)\; mod  \; p'},a_{\beta'-1}] $
 since $\beta-2=\beta'-1$.

  It remains to prove that $\phi(\widetilde{M}) = M_2 $ is the regular initial marking of $C_2$. From Corollary \ref{regular-M0-reg-coord} we obtain 
  $M_0 = \{[s_{p-1},a_0]\} \cup \{ [s_i,a_{i+1}] | 0\leq i < \beta-1  \}   \cup $\\$
 \{ [s'_i,a_0] | p-\alpha\leq i < p\}$. 
 By $\widetilde{M} = M_0 -\{[s_{\beta \ominus_p  2},a_{\beta-1}]\} \cup 
\{\eqcl[ ]{S^ {\{0,\beta-1\}}_{p-(\alpha+1)}}\}$
the forward token of the 
 $a_{\beta-1}$-process is deleted, which is in agreement with $\beta' = \beta -1$.
 Due to $\alpha' = \alpha+1$ the token in the backward place $[s'_{p-\alpha},a_0]$ is missing. It is added by
 $\phi(\eqcl[ ]{S^ {\{0,\beta-1\}}_{p-(\alpha+1)}}) = [s'_{p-(\alpha+1)},a_0]= [s'_{p-\alpha'},a_0]$.
\qed\end{proof}

Figure \ref{2346-3252} shows the cycloid $  C_0 =\mathcal{C}( \alpha, \beta, \gamma, \delta,M_0) = \mathcal{C}( 2,3,4,6,M_0 ) $ and its folding $\mathcal{C}_{bf(\{0,2\})}( 2,3,4,6,\widetilde{M} )$ as 	intermediate step to the cycloid $ C_2 = \mathcal{C}( \alpha+1, \beta-1, p-(\alpha+1), \beta-1,M_2 )=   \mathcal{C}( 3,2,5,2) $. The transition $stop$ in $\mathcal{C}_1 $ simulates the transit from the marking $M_0$ to $\widetilde{M}$. The mappings in the proof for $i = 5$   and $D= \{0,2\}$ with $S^D_{5}:=\{[s'_5,a_0], [s'_{2},a_{2}]\}$ are
$f_D( [s'_5,a_0]) = f_D(  [s'_{2},a_{2}]) = \eqcl[ ]{S^D_{5}} $  and $\phi_D( \eqcl[ ]{S^D_{5}}) = [s'_{5},a_{0}]$.

\begin{definition} \label{def-stop} 
Given a total backward folding $ \mathcal{C}_{bf}( g,c,c,c, \eqcl[]{M_0} ) $
(Definition \ref{de-bf-folding}),
a \emph{stop resilient cycloid}
  $ \mathcal{C}^{stop}_{bf}( g,c,\eqcl[]{M_0} ) $ is defined by adding 
       transitions $[t_{stop},a_{j}], (0  \leq j < c)$ with
       ${}^{\ndot}{[t_{stop},a_{j}]}= \{[s_{j\ominus_n 1},a_j]\}$,
       ${[t_{stop},a_{j}]}^{\ndot} = \{\postnbbw{[t_{j},a_{j}]}\}$.
       Then the process of $a_{j}$ is called a \emph{stoppable} process.
\end{definition}

A transition $[t_{stop},a_j]$ deletes the control token in $[s_{j\ominus_n 1},a_j]$ of the $a_j$-process, but in same time it generates a permit signal by a token in the backward output place $\{\postnbbw{[t_{j},a_{j}]}\}$, just as the alternative transition $[t_{j},a_{j}]$
 would have done in its next step.
In terms of cars and gaps in a lane, this means that the stopping car is still able to leave the lane  giving place to the next car. 
We now formulate the final result of this section.


\begin{theorem} \label{th-stop}
Let be $s \in \Natp$ with $0 < s <c$.
\begin{itemize}
   \item [a)]   
          Stopping any number of $s$ processes   of  $  \mathcal{C}_0 = \mathcal{C}^{stop}_{bf}( g,c,\eqcl[]{M_0} ) $ 
          (by transitions $[t_{stop},a_{j}]$) 
          results in a safe net, where    	all transitions $[t_i, a_j]$ from
           not stopped processes are live. 
   \item [b)] Deleting all transitions $[t_i, a_j]$ and forward places $[s_i, a_j]$ of stopped processes and the stop 	transitions  $[t_{stop},a_{j}]$ in  $ \mathcal{C}_0$ results in a folding
  $\mathcal{C}_1 = \mathcal{C}_{bf}( g+s,c-s,\eqcl[]{M_1} ) $, where the underlying nets are isomorphic
  with respect to initial markings $\eqcl[]{M'_0}$ of  $ \mathcal{C}_0$ and $\eqcl[]{M'_1}$  of  $ \mathcal{C}_1$,
  respectively. $\eqcl[]{M'_0}$ is a regular follower marking of $\eqcl[]{M_0}$ in  $ \mathcal{C}_0$ and 
  $\eqcl[]{M'_1}$ is a regular follower marking of $\eqcl[]{M_1}$ in  $ \mathcal{C}_1$.
\end{itemize}   
\end{theorem}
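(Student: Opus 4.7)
The plan is to proceed by induction on $s$, using Theorem~\ref{cyc-beta-1} as the engine for a single stop and then iterating. Note first that $\mathcal{C}_0$ is built on the regular cycloid $\mathcal{C}(g,c,c,c)$, for which $p = g+c = n$, so by Lemma~\ref{eq-classes}a) every backward class has the uniform shape $S^{bf}_i = \{[s'_i,a_j] \mid 0 \leq j < c\}$. This uniformity is what makes the total folding well-behaved under elimination of any one process, regardless of its index.

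For part (a), I would first analyse a single firing of $[t_{stop},a_{j}]$ from a reachable regular marking. The transition removes the forward control token from $[s_{j\ominus_n 1},a_j]$ and deposits a token in $\postnbbw{[t_{j},a_{j}]}$, which is exactly the backward token that $[t_{j},a_{j}]$ would have produced. Hence no forward token ever returns to the $a_j$-process, so all its forward transitions become dead, while the backward class $S^{bf}_i$ (which contains the deposited token) continues to support the remaining processes through the folded places. Safety of the whole construction follows from Theorem~\ref{bf-safe} applied to the surviving subnet, and liveness of the non-stopped transitions follows from Theorem~\ref{bf-live} once the surviving subnet has been identified with a smaller total bf-folding. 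Iterating on $s$ then gives part (a).

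For part (b), the heart of the argument is to exhibit the isomorphism for a single stop and then compose $s$ such isomorphisms. The $s=1$ case is a variant of Theorem~\ref{cyc-beta-1}, adapted from the partial folding $D=\{0,\beta-1\}$ to the total folding: when $p=n$ the class $S^{bf}_i$ in $\mathcal{C}_{bf}(g,c,c,c)$ loses one member when the $a_{c-1}$-process is deleted, leaving $\{[s'_i,a_j] \mid 0 \leq j < c-1\}$, which is exactly the class $S^{bf}_i$ of $\mathcal{C}_{bf}(g+1,c-1,c-1,c-1)$. The renaming $\phi$ of Theorem~\ref{cyc-beta-1} extends to an isomorphism here, with input and output transitions matched via Lemma~\ref{le-reg-coord}a)-b) exactly as in the table of that proof (the verification of process length is immediate from Lemma~\ref{reg-proc}a), since $p - (\alpha+1) = c-1$). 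Iterating $s$ times produces the claimed folding $\mathcal{C}_1 = \mathcal{C}_{bf}(g+s,c-s,\eqcl[]{M_1})$, and the two markings $\eqcl[]{M'_0}$ and $\eqcl[]{M'_1}$ are identified as regular follower markings via Corollary~\ref{regular-M0-reg-coord}b), which also absorbs any phase shift caused by the ordering in which processes are stopped.

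The main obstacle will be the inductive step when an \emph{arbitrary} process $a_j$, rather than the distinguished $a_{c-1}$, is stopped. In the total folding the backward classes are invariant under the cyclic relabelling $a_j \mapsto a_{j+1 \bmod c}$, so a reindexing of the surviving processes reduces the general case to the one treated by Theorem~\ref{cyc-beta-1}; but one must check carefully that the resulting marking is still a regular follower marking of the smaller cycloid (possibly a $k$-regular marking in the sense of Corollary~\ref{regular-M0-reg-coord}b) rather than the canonical $M_0$), and that composing $s$ such reindexings yields a well-defined overall isomorphism between the two underlying nets independent of the stopping order.
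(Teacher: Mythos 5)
Your overall route is the paper's route: part (a) reduced to part (b) via Theorems \ref{bf-safe} and \ref{bf-live}; part (b) proved for a single stop and then iterated over $s$; the highest-index case handled by Theorem \ref{cyc-beta-1} (where, since $p=n$, $p-(\alpha+1)=c-1$ and the identity map suffices); and a general $a_k$ handled by relabelling the surviving processes. The paper does not detour through a cyclic symmetry argument but simply writes the relabelling down, $\phi([t_i,a_j])=[t_i,a_{j-1}]$ for $j>k$ and $\phi(\eqcl[bf]{S_i})=\eqcl[bf]{S_i}$, and checks the backward-class adjacency with Lemma \ref{eq-classes}a); for the unmarked total folding with $p=n$ this is equivalent to your symmetry observation, so structurally the two arguments coincide.

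The genuine gap is exactly the point you label ``the main obstacle'' and then leave open: the marking correspondence. After stopping $a_k$ with $k<c-1$, the $\phi$-image of the reached marking is \emph{not} a regular marking of $\mathcal{C}_1=\mathcal{C}_{bf}(g+1,c-1,\ldots)$, because the relabelled processes keep their old positions and the staggered pattern is destroyed; the clause about $\eqcl[]{M'_0}$ and $\eqcl[]{M'_1}$ in the theorem is precisely the repair of this, so omitting it omits the actual content of part (b). The paper closes it with the observation you state in part (a) but never use in (b): $[t_{stop},a_k]$ deposits exactly the backward token that $[t_k,a_k]$ would have produced, hence the processes $a_0,\dots,a_{k-1}$ can each advance one step, and the marking so reached is the $1$-regular marking of $\mathcal{C}_1$ in the sense of Corollary \ref{regular-M0-reg-coord}b), while from $\eqcl[]{M_0}$ a $1$-regular marking $\eqcl[]{M'_0}$ of $\mathcal{C}_0$ is reachable with $\phi(\eqcl[]{M'_0})=\eqcl[]{M'_1}$. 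Moreover, for the induction over $s$ one must note (as the paper does) that from this $1$-regular marking the regular initial marking of $\mathcal{C}_1$ is reachable again and that the stop transitions are re-added before the next step; without these two points your ``composition of $s$ reindexings, independent of stopping order'' is asserted rather than proved. So the plan is the right one, but the step carrying the substance of part (b) still has to be carried out.
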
 
\begin{figure}[htbp]

\hspace{-1,6 cm}
        \includegraphics [scale = 0.22]{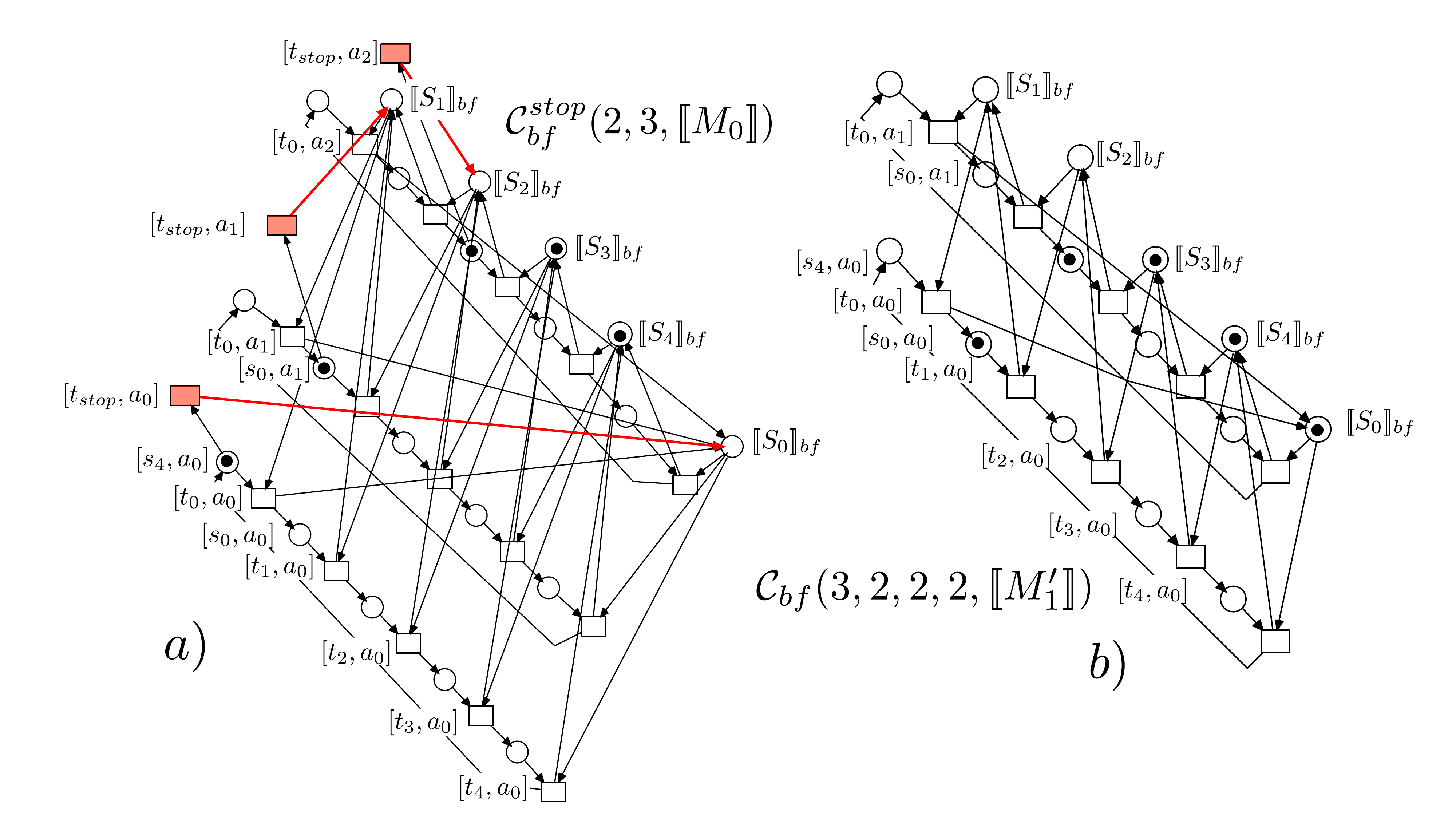}
        \caption{The stop resilient cycloid $ \mathcal{C}^{stop}_{bf}( 2,3 ) $ and the backward folding 
        $ \mathcal{C}^{}_{bf}( 3,2,2,2 ) $ }.
        \label{C-stop}
   
\end{figure}

\begin{proof} 
a) This follows from b) below since by Theorem \ref{bf-live} 
$ \mathcal{C}_1$ is live.\\
b) We  prove this part for a single stopping process $a_k$ (hence $s=1$), from which follows this part of the theorem  by induction.
 For a stopping process $a_{c-1}$ the proof is already done by Theorem \ref{cyc-beta-1},
since for $p=n$ we obtain $p-(\alpha+1)=p-(g+1) = (g+c)-(g+1) = c-1$. The isomorphism is the identity map.\\
b) If the stopping process is $a_k$ with $0 \leq k < c-1$ we define the isomorphism by 
$\phi([t_i,a_j]) = 
	\begin{cases}
		[t_i,a_{j-1}]&  \, \text{if}  \, \, j > k \\
		[t_i,a_j] & \, \text{otherwise}
\end{cases}$,
$\phi([s_i,a_j]) = 
	\begin{cases}
		[s_i,a_{j-1}]&  \, \text{if}  \, \, j > k \\
		[s_i,a_j]) & \, \text{otherwise}
\end{cases}$ 
and $\phi(\eqcl[bf]{S_i}) = \eqcl[bf]{S_i}$ for $0 \leq i < n$,
since by deleting the $a_k$-process the processes with higher indices reduce their indices.
After the occurrence of the transition $[t_{stop},a_k]$ (in this proof, not in a simulation), all its places and transitions are deleted. Therefore only the synchronization by backward places between $a_{k+1}$ and 
$a_{k\ominus_c 1}$ are to be considered. By Lemma \ref{eq-classes} a) we obtain 
$\phi({[t_i,a_{k+1}])}^{\ndot} = {[t_i,a_{k}]}^{\ndot}  = \eqcl[bf]{S_i}$ and 
$\phi([t_{i\ominus_n 1},a_{k-1}])=[t_{i\ominus_n 1},a_{k-1}]  \in {\eqcl[bf]{S_i}}^{\ndot}$.
This satisfied the structure of  $ \mathcal{C}_1$. 
Due to the renaming of $[t_i,a_{j}]  \, \text{for}  \, \, j > k$ by $\phi$ the image of the initial regular marking is not a regular marking of  $ \mathcal{C}_1$.
Due to the definition of $[t_{stop},a_k]$ a token is inserted into  $\postnbbw{[t_{k},a_{j}]}$. Therefore the transitions $[t_i,a_{j}]  \, \text{for}  \, \, 0  \leq j <k  $ can occur successively, resulting in a regular follower marking $\eqcl[]{M'_1}$ of  $ \mathcal{C}_1$, which is the $1$-regular marking of  $ \mathcal{C}_1$, as defined in Corollary \ref{regular-M0-reg-coord} b).
Also by this corollary the  the regular initial marking of $ \mathcal{C}_1$ reachable again. This is important to note with respect to the application of the induction step, as mentioned at the beginning of the proof. Also before
the next induction step the stop transitions must be added again. To finish the proof for the isomorphism we note that from the regular initial marking $\eqcl[]{M_0} $ of  $ \mathcal{C}_0$ the 1-regular marking $\eqcl[]{M'_0} $ is reachable satisfying $\phi(\eqcl[]{M'_0}) = \eqcl[]{M'_1}$.
\qed\end{proof}

\begin{figure}[htbp]
 \begin{center}
        \includegraphics [scale = 0.22]{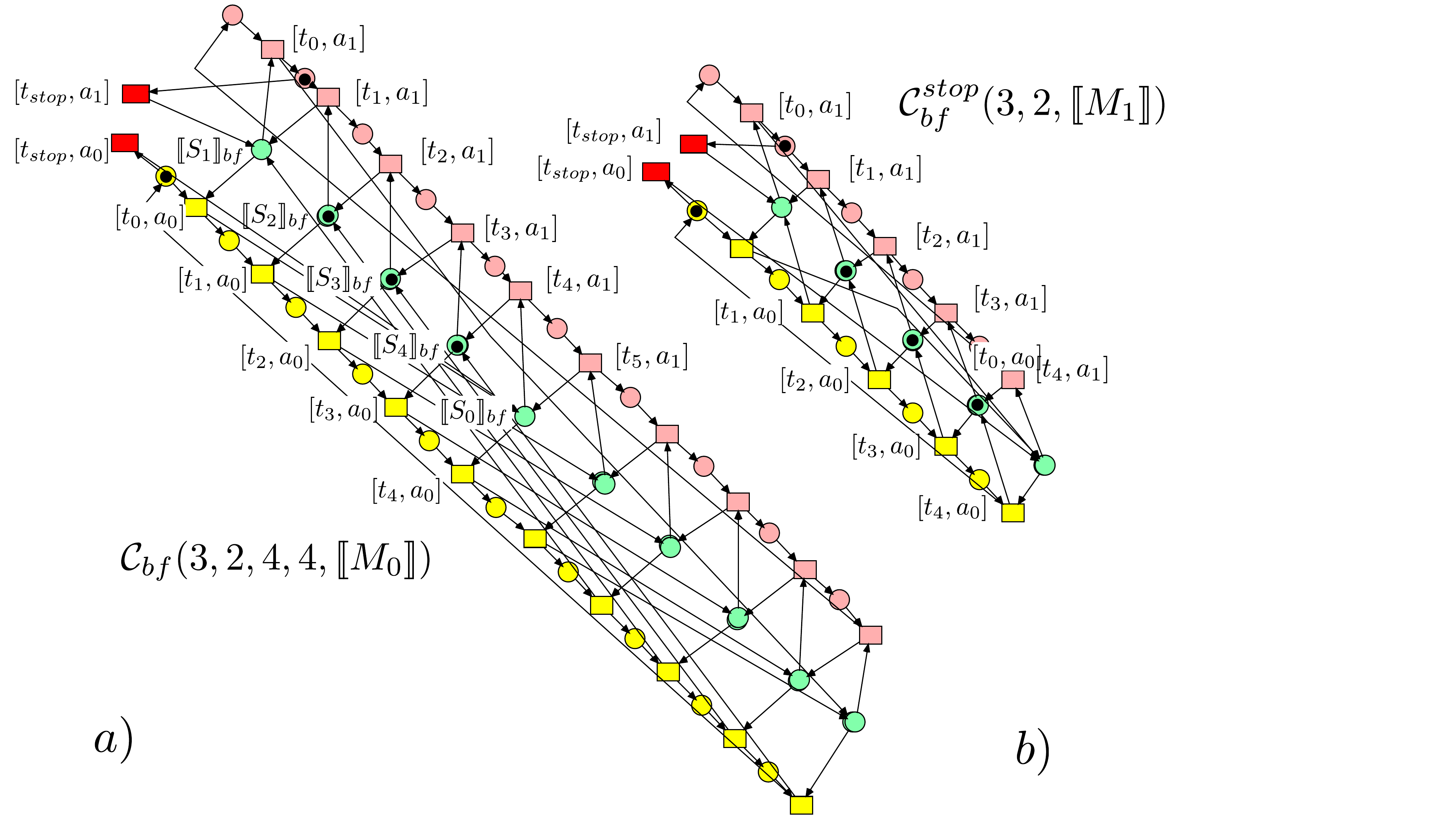}
        \caption{Counter-example to Theorem  \ref{th-stop} }
        \label{3244-vs-3222}
      \end{center}
\end{figure}
\vspace{-0.15  cm}
If possible by an arbitrary reached marking the stopping of several processes can occur in an interleaving mode. This property holds since a stopping transition 
$[t_{stop},a_j]$ generates the same token in a backward place as the transition $[t_{j},a_j]$ would do.
  

Figure \ref{C-stop} a) shows the folding  $ \mathcal{C}^{stop}_{bf}( 2,3,\eqcl[]{M_0} ) $. By the occurrence of one of the three stop-transitions and deleting the related transitions a folding is generated, which is isomorphic to the folding  $ \mathcal{C}_{bf}( 3,2,2,2,\eqcl[]{M'_1} ) $ in Figure \ref{C-stop} b). Then by reaching the regular initial marking  $\eqcl[]{M_2}$ and by adding the stopping transitions again we obtain 
$ \mathcal{C}^{stop}_{bf}( 3,2,\eqcl[]{M_2} )$. In the same way  by another induction step a folding 
$ \mathcal{C}^{stop}_{bf}( 4,1,\eqcl[]{M_3} )$  can be obtained.

Different to Theorem \ref{cyc-beta-1} the preceding Theorem \ref{th-stop} does not hold for process lengths greater than $n = c+g$. For the   backward folding 
$ \mathcal{C}_{bf}( 3,2,4,4, \eqcl[]{M_0}) $ of the regular cycloid  $ \mathcal{C}_{}( 3,2,4,4, M_0 ) $  in Figure \ref{3244-vs-3222}a) after the occurrence of the added transition $[t_{stop},a_0]$ and the transition sequence $[t_1,a_1],$ $[t_2,a_1],[t_3,a_1],[t_4,a_1] $ transition $[t_5,a_1]$ is not enabled. This is in contrast to $ \mathcal{C}^{stop}_{bf}( 3,2,\eqcl[]{M_1} ) $ in Figure \ref{3244-vs-3222} b)  where after the same transition sequence  $[t_0,a_1]$ is  enabled. Hence, stop resilience cannot implemented for the extensions like  
$ \mathcal{C}(g,c,\frac{g \cdot c}{\Delta},$
 $\frac{g \cdot c}{\Delta},M_0)$
in the way discussed here, but must be realized by iterating $ \mathcal{C}^{stop}_{bf}(g,c,\eqcl[]{M_0} ) $.


\section{Conclusion}\label{sec-conclusion}
Despite the tight synchronisation of sequential processes in the form of regular cycloids, it has been possible to extend the formalism in such a way that individual processes can fail without hindering the other processes.

\bibliographystyle{splncs03}
\bibliography{citations-rv}
\end{document}